\documentclass[onefignum,onetabnum]{siamonline171218}


\usepackage{amsfonts}
\usepackage{amsmath}
\usepackage{array}
\usepackage{booktabs}
\usepackage{caption}
\usepackage{color}
\usepackage{graphicx}
\usepackage{microtype}
\usepackage{multirow}
\usepackage{nicefrac}
\usepackage{subfigure}
\usepackage{fancyvrb}
\usepackage{rotating}
\usepackage{wrapfig}

\usepackage{dgleich-math}
\usepackage{dgleich-tensor}

\newcommand{\hypergraph}{\mathcal{H}}

\newtheorem{example}[theorem]{Example}

\definecolor{mylinkcolor}{RGB}{0,0,140}
\hypersetup{colorlinks,allcolors=mylinkcolor,citecolor=mylinkcolor}
\usepackage[capitalize]{cleveref}
\crefname{section}{Section}{Sections}

\newcommand{\dataset}[1]{\textsc{#1}}
\newcommand{\ngrams}{\dataset{N-grams}}
\newcommand{\tags}{\dataset{tags-ask-ubuntu}}
\newcommand{\dawn}{\dataset{DAWN}}

\usepackage{paralist}

\renewenvironment{enumerate}[1]{\begin{compactenum}#1}{\end{compactenum}}

\usepackage[square,semicolon]{rnatbib}
\renewcommand{\cite}{\citep}

\usepackage{tikz,nicefrac,amsmath,pifont,tkz-graph,makecell,tkz-graph}
\usetikzlibrary{arrows,backgrounds,patterns,matrix,shapes,fit,calc,shadows,plotmarks}
\usetikzlibrary{arrows.meta}

\sloppy

\begin{document}

\VerbatimFootnotes

\title{Three hypergraph eigenvector centralities}
\author{
Austin R.~Benson \\
Cornell University \\
arb@cs.cornell.edu
}

\maketitle

\begin{abstract}
Eigenvector centrality is a standard network analysis tool for determining the importance of 
(or ranking of) entities in a connected system that is represented by a graph.
However, many complex systems and datasets have natural multi-way interactions that are more faithfully
modeled by a hypergraph.
Here we extend the notion of graph eigenvector centrality to uniform hypergraphs.
Traditional graph eigenvector centralities are given by a positive eigenvector of the adjacency matrix,
which is guaranteed to exist by the Perron-Frobenius theorem under some mild conditions.
The natural representation of a hypergraph is a hypermatrix (colloquially, a tensor).
Using recently established Perron-Frobenius theory for tensors, 
we develop three tensor eigenvectors centralities for hypergraphs, each with different interpretations.
We show that these centralities can reveal different information on real-world data by analyzing hypergraphs constructed from n-gram frequencies, co-tagging on stack exchange, and drug combinations observed in patient emergency room visits.
\end{abstract}

\begin{keywords}
  hypergraph, tensor, eigenvector, centrality, network analysis
\end{keywords}


\section{Finding important entities from relations}
The central question of centrality and ranking in network analysis is:\
how do we find the important entities given a set of relationships between them?
Make no mistake---when the relationships are pairwise and the system is modeled by a graph, 
there is a plethora of definitions and methods for centrality~\cite{Borgatti-2006-centrality,Estrada-2010-properties,Boldi-2014-axioms},
and the study of centrality in social network analysis alone has a long-standing history~\cite{Bavelas-1950-communication,Katz-1953-status,Sabidussi-1966-centrality,Bonacich-1972-eigenvector,deSolaPool-1978-contacts,Freeman-1977-betweenness}.
Somewhat more modern developments come from Web applications, such as PageRank, which was
used in the early development of Google search results~\cite{Brin-1998-anatomy,Page-1999-PageRank}, 
and hub and authority scores, which were used to find authoritative Web sources~\cite{Kleinberg-1999-HA}.
Centrality measures are a pivotal part of the network analysis toolbox and thus get used in a variety of 
applications~\cite{Gleich-2015-PageRank,Jeong-2001-lethality,Bullmore-2009-brains}.
And in addition to the problem of identifying important nodes, centrality is also used as a
feature in network analysis machine learning tasks such as
role discovery~\cite{Henderson-2012-RolX},
computing graph similarity~\cite{Koutra-2013-DeltaCon},
and spam detection~\cite{Ye-2015-spam}.

A major shortcoming of network centrality stems from the long-running assumption throughout network science
that relationships are pairwise measurements and hence a graph is
the appropriate mathematical model~\cite{Strogatz-2001-exploring,Newman-2003-survey}.
Thus, nearly all centrality measures are designed within this dyadic paradigm.
However, many systems contain simultaneous interactions between several entities.
For example, people communicate and collaborate in groups,
chemical reactions involve several reagents, and
more than two labels may be used to classify a product.
In these cases, a hypergraph is a more faithful model,
but we lack foundational mathematical analogs of centrality for this model.

This paper focuses on developing analogs of graph \emph{eigenvector centrality} for hypergraphs.
The term ``eigenvector centrality'' has two meanings in network science. 
Sometimes, eigenvector centrality means any set of centrality scores on nodes
that is an eigenvector of some natural matrix associated with the network at hand.
These include, for example, the aforementioned 
PageRank\footnote{PageRank has been called the ``\$25,000,000,000 Eigenvector''~\cite{Bryan-2006-google}.}
and hub and authority scores.
Other times, eigenvector centrality refers specifically to the principal eigenvector of the adjacency matrix of a graph~\cite{Newman-2008-math}
(which makes the vernacular confusing).
This eigenvector centrality was originally proposed by \citet{Bonacich-1972-eigenvector}, was later
used to study social networks~\cite{Bonacich-1987-power}, and will be the notion of eigenvector centrality used in this paper.

\paragraph*{Background on graph eigenvector centrality}
Here we provide the requisite background on eigenvector centrality for graphs.
We generalize the formulation to hypergraphs in the next section.
Assume that we have a strongly connected (possibly directed)
graph $G = (V, E)$ with adjacency matrix $\mA$.
The eigenvector centrality $\vc$ may be derived via the following two desiderata~\cite{Bonacich-1972-eigenvector,Newman-2008-math}:
\begin{enumerate}
\item The centrality score of each node $u$, $c_u$, is proportional 
to the sum of the centrality scores of the neighbors of $u$, i.e., $c_u \propto \sum_{(u, v) \in E} c_v$.
\item The centrality scores should be positive, i.e., $\vc > 0$.
\end{enumerate}
Assuming the same proportionality constant, we may auspiciously
write the first condition as
\begin{equation}
c_u = \frac{1}{\lambda}\sum_{(u, v) \in E} c_j,\;\text{for all } u \in V,
\end{equation}
where $\lambda$ is a constant.
The matrix enthusiast quickly recognizes that $\vc$ is an eigenvector of $\mA$:
\begin{equation}\label{eq:evec_centrality}
\mA\vc = \lambda \vc.
\end{equation}
\Cref{eq:evec_centrality} holds for any eigenpair of $\mA$.
The second of our desiderata, along with the Perron-Frobenius theorem, tells us which one to use.

\begin{theorem}[Perron-Frobenius Theorem for matrices as in Theorem 1.4 of \cite{Berman-Plemmons-1994-book}]\label{thm:pf}
Let $\mA$ be an irreducible matrix.
Then there exists an eigenvector $\vc > 0$ such that $\mA\vc = \lambda_1 \vc$,
$\lambda_1 > 0$ is an eigenvalue of largest magnitude of $\mA$, the eigenspace associated with
$\lambda_1$ is one-dimensional, and $\vc$ is the only nonnegative eigenvector of $\mA$ up to scaling.
\end{theorem}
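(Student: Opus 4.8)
The plan is to establish this classical theorem by the Collatz--Wielandt variational argument, the cleanest of several routes (a Brouwer fixed-point argument on the simplex, or an analysis of the resolvent $(\lambda I-\mA)^{-1}$ as $\lambda\downarrow\rho(\mA)$, would also work). Throughout I take $\mA\ge 0$ (it is an adjacency matrix), irreducible, with $n\ge 2$ rows. \emph{Step 0:} isolate the one consequence of irreducibility the whole proof leans on---the digraph of $\mA$ is strongly connected, so any two indices are joined by a walk of length at most $n-1$, hence $\mB:=(I+\mA)^{n-1}$ is entrywise positive. \emph{Step 1 (a positive eigenvector):} for $\vx\ge 0$, $\vx\ne 0$, set $r(\vx)=\min\{(\mA\vx)_i/x_i:x_i>0\}$, so $r(\vx)\,\vx\le\mA\vx$, and let $\lambda_1=\sup_{\vx\in\Delta}r(\vx)$ over the unit simplex $\Delta=\{\vx\ge 0:\sum_i x_i=1\}$. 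Summing the inequality bounds $r$ above by the largest column sum, while $r(\mathbf{1})$ is the smallest row sum, positive by irreducibility, so $0<\lambda_1<\infty$. As $r$ is upper semicontinuous on the compact set $\Delta$ (continuous on its relative interior, and at a boundary point $\vc$ the coordinates with $c_i=0$ cannot lower the limiting value below $\min\{(\mA\vc)_i/c_i:c_i>0\}=r(\vc)$), the supremum is attained at some $\vc\ge 0$. Then $\mA\vc-\lambda_1\vc\ge 0$; were it nonzero, applying $\mB$ would give $\mA(\mB\vc)-\lambda_1(\mB\vc)=\mB(\mA\vc-\lambda_1\vc)>0$, so $r(\mB\vc)>\lambda_1$, contradicting maximality. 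Hence $\mA\vc=\lambda_1\vc$, and since $\mB\vc=(1+\lambda_1)^{n-1}\vc$ by the binomial theorem while $\mB\vc>0$, in fact $\vc>0$.

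For the remaining claims: \emph{largest magnitude}---for any eigenpair $(\mu,\vz)$ the triangle inequality gives $|\mu|\,|z_i|=|(\mA\vz)_i|\le(\mA|\vz|)_i$, i.e.\ $\mA|\vz|\ge|\mu|\,|\vz|$ with $|\vz|\ge 0$ nonzero, so $|\mu|\le r(|\vz|)\le\lambda_1$, and $\lambda_1>0$ is itself an eigenvalue. \emph{Uniqueness and simplicity}---first, any nonzero nonnegative eigenvector $\vw$ is positive with eigenvalue $\lambda_1$: its eigenvalue $\mu$ is $\ge 0$; $\mu=0$ would force $\mB\vw=\vw>0$ and hence a zero row in $\mA$; so $\mu>0$ and $\vw=(1+\mu)^{-(n-1)}\mB\vw>0$; applying Step~1 to $\mA^{\!\top}$ gives a positive left eigenvector $\vu$ with $\vu^{\!\top}\mA=\eta\,\vu^{\!\top}$, and $\eta\,(\vu^{\!\top}\vw)=\vu^{\!\top}\mA\vw=\mu\,(\vu^{\!\top}\vw)$ with $\vu^{\!\top}\vw>0$ forces $\mu=\eta$, which with $\vw=\vc$ is $\lambda_1$. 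Then for any eigenvector $\vc'$ of $\lambda_1$, which I may take real, after negating so that it has a positive coordinate the number $t^{*}:=\min\{c_i/c'_i:c'_i>0\}$ is positive and $\vc-t^{*}\vc'$ is a nonnegative eigenvector of $\lambda_1$ with a vanishing coordinate; since a nonzero nonnegative eigenvector must be strictly positive by the previous line, it is $0$, so $\vc'\in\operatorname{span}\{\vc\}$---simultaneously the one-dimensionality of the eigenspace and, with the previous line, that $\vc$ is the unique nonnegative eigenvector up to scaling.

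The delicate step is Step~1: $r$ is genuinely discontinuous on the boundary of $\Delta$, so attaining the supremum rests on upper semicontinuity, and the maximizer is a priori only a subinvariant vector possibly on the boundary---promoting it to a strictly positive honest eigenvector is exactly where irreducibility, through positivity of $(I+\mA)^{n-1}$, is indispensable; everything afterward is bookkeeping with that same positivity trick. (If one additionally wants $\lambda_1$ to be a simple root of the characteristic polynomial rather than merely to have a one-dimensional eigenspace, it follows from $\vu^{\!\top}\vc>0$ by a standard argument I would cite rather than reprove.)
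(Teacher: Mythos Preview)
The paper does not prove this theorem; it is stated with a citation to Berman and Plemmons and used as a black box. There is therefore nothing to compare against on the paper's side.

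Your argument via the Collatz--Wielandt function is correct and is one of the standard proofs. A few small remarks. First, you rightly add the hypothesis $\mA\ge 0$, which the stated theorem omits but which is implicit in context (adjacency matrices); without it the conclusion is false. Second, in the upper-semicontinuity step your parenthetical is a bit terse: the point is that for any index $j$ with $c_j>0$ one eventually has $x_j^{(k)}>0$, so $r(\vx^{(k)})\le (\mA\vx^{(k)})_j/x_j^{(k)}\to(\mA\vc)_j/c_j$, whence $\limsup_k r(\vx^{(k)})\le r(\vc)$; this is exactly what you assert, and it is correct. Third, in the uniqueness paragraph, the line ``$\mu=0$ would force $\mB\vw=\vw>0$ and hence a zero row in $\mA$'' is right but compressed: $\vw>0$ together with $\mA\vw=0$ forces every row of $\mA$ to vanish, contradicting irreducibility for $n\ge 2$. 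Finally, your parenthetical about algebraic simplicity is appropriate: the statement in the paper only asserts geometric simplicity (one-dimensional eigenspace), which is precisely what you establish.
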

If $\mA$ is the adjacency matrix of a strongly connected graph, then $\mA$ is
irreducible and we can apply the theorem. The vector $\vc$ gives the centrality scores,
which are unique up to scaling.
Eigenvector centrality of this form has appeared in a range of
applications, including the analysis of infectious disease spreading in
primates~\cite{Balasubramaniam-2016-social}, patterns in fMRI data of human
brains~\cite{Lohmann-2010-eigenvector}, and career trajectories of Hollywood
actors~\cite{Taylor-2017-eig-temporal}.

\section{Hypergraph eigenvector centralities}\label{sec:defs}

Instead of a graph, we now assume that our dataset is an $m$-uniform 
hypergraph $\hypergraph = (V, E)$, which means that each hyperedge $e \in E$ 
is a size-$m$ subset of $V$.
If $m = 3$, a natural representation of 
$\hypergraph$ is an $n \times n \times n$ symmetric
 ``hypergraph adjacency tensor":%
\footnote{Technically, this object is a hypermatrix. However, ``tensor" is synonymous with
multi-dimensional array the data mining community~\cite{Kolda-2009-tensor-decompositions}, 
so we use it here. See \citet{Lim-2014-tensors} for precise distinctions.}
\begin{equation}\label{eq:adj_tensor}
\cT_{u, v, w} =  
\begin{cases}
1 & \text{if}\ (u, v, w) \in E \\
0 & \text{otherwise}.
\end{cases}
\end{equation}
When deriving graph eigenvector centrality above, we used an irreducible adjacency matrix from a strongly connected graph.
We need analogous notions for tensors and hypergraphs.

\begin{definition}[Irreducible tensor~\cite{Lim-2005-variational}]\label{def:irr}
An order-$m$, dimension-$n$ tensor $\cmT$ is \emph{reducible} if there exists a non-empty 
proper subset $S \subset \{1, \ldots, n\}$ such that  for any $i \in S$ and $j_2, \ldots, j_m \notin S$, $\cmT_{i, j_2, \ldots, j_m} = 0$.
If $\cmT$ is not reducible, then it is \emph{irreducible}.
\end{definition}

We introduce connected hypergraphs here using the language of tensors.
The definition is the same as classical notions of connectivity in hypergraphs~\cite{Berge-1984-hypergraphs} when the tensor is symmetric,
which is the case in \cref{eq:adj_tensor}.

\begin{definition}[Strongly connected hypergraph]\label{def:conn}
An $m$-uniform, $n$-node hypergraph with adjacency tensor $\cmT$ is
\emph{strongly connected} if the graph induced by the $n \times n$ matrix
$\mM$ obtained by summing the modes of $\cmT$, $M_{ij} = \sum_{j_3, \ldots, j_m} \cT_{i,j,j_3, \ldots, j_m}$, is strongly connected.
\end{definition}

The matrix $\mM$ defined above is called the \emph{representative matrix} of $\cmT$,
and, importantly, \emph{a strongly connected hypergraph has an irreducible adjacency tensor}~\cite{Qi-2017-tensor-book}.
Here we assumed an ``undirected" set-based definition of hypergraphs, so $\cmT$ is symmetric following \cref{eq:adj_tensor}.
This means that the graph induced by $\mM$ is undirected and ``strongly connected" really just means ``connected.''
Furthermore, the graph induced by $\mM$ has the same connectivity as the clique expansion graph of a hypergraph~\cite{Agarwal-2006-higher},
where each hyperedge induces a clique on the nodes in the graph.
There are natural notions of directed hypergraphs with non-symmetric adjacency tensors~\cite{Gallo-1993-directed-hgraph},
and some of the theorems we use later still apply in these cases.
Therefore, we use the term ``strongly connected" throughout.

In the rest of this section, we develop three eigenvector centralities for strongly connected hypergraphs.
To do so, we generalize the desiderata for the eigenvector centrality scores $\vc$:
\begin{enumerate}
\item Some function $f$ of the centrality of node $u$, $f(c_u)$, should be proportional 
to the sum of some function $g$ of the centrality score of its neighbors. In a 3-uniform hypergraph,
this means that for some positive constant $\lambda$,
\begin{equation}\label{eq:des1}
f(c_u) = \frac{1}{\lambda}\sum_{(u, v, w) \in E} g(c_v, c_w)
\end{equation}
\item The centrality scores should be positive, i.e., $\vc > 0$.
\end{enumerate}
Different choices of $f$ and $g$ give new notions of centrality.
Careful choices of $f$ and $g$ relate to matrix and tensor eigenvectors.
To keep notation simpler, we use $3$-uniform hypergraphs when
introducing new concepts (as in \cref{eq:des1})
and then generalize ideas to $k$-uniform hypergraphs.

\subsection{Clique motif Eigenvector Centrality (CEC)}

Perhaps the most innocuous choice of $f$ and $g$
in \cref{eq:des1} are $f(c_u) = c_u$ and 
$g(c_v, c_w) = c_v + c_w$. 
In this case, there is a simple matrix formulation of the eigenvector formulation. 
This is unsurprising since $f$ and $g$ are linear.

\begin{proposition}\label{prop:cec}
Let $\hypergraph$ be a strongly connected $3$-uniform hypergraph.
When $f(c_u) = c_u$ and $g(c_{v}, c_{w}) = c_v + c_w$ in \cref{eq:des1},
the centrality scores are given by the eigenvector of the largest real eigenvalue of the
matrix $\mW$, where $W_{ux}$ is the number of hyperedges containing $u$ and $x$.
\end{proposition}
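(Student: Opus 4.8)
The plan is to show that, for these choices of $f$ and $g$, \cref{eq:des1} is exactly a matrix eigenvalue equation $\mW\vc = \lambda\vc$, and then to single out the desired eigenpair with the Perron--Frobenius theorem after checking that $\mW$ inherits irreducibility from the strong connectivity of $\hypergraph$.

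First I would rewrite the right-hand side of \cref{eq:des1} as a matrix--vector product. Fix a node $u$. In $\sum_{(u,v,w)\in E}(c_v + c_w)$, every hyperedge incident to $u$ contributes the centralities of its two other members, so after collecting terms the coefficient of $c_x$ is the number of hyperedges containing both $u$ and $x$, namely $W_{ux}$ (and there is no $c_u$ term, consistent with reading $W_{uu}$ as counting hyperedges with two distinct members equal to $u$, i.e.\ $W_{uu}=0$). Hence the right-hand side is $(\mW\vc)_u$, and \cref{eq:des1} becomes $\lambda c_u = (\mW\vc)_u$ for every $u$, that is, $\mW\vc = \lambda\vc$. (Should one instead sum over ordered tuples in \cref{eq:des1}, or pass to a $k$-uniform hypergraph, every coefficient gets multiplied by the same fixed positive integer, which only rescales $\lambda$ and leaves the eigenvector unchanged.) So any $\vc$ meeting the two desiderata must be a positive eigenvector of $\mW$.

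Next I would verify the hypotheses of \Cref{thm:pf} for $\mW$. Nonnegativity is immediate, since the entries of $\mW$ are counts. For irreducibility, observe that off the diagonal the support of $\mW$ coincides with that of the representative matrix $\mM$: $W_{ux} > 0$ precisely when some hyperedge contains both $u$ and $x$, which is exactly an edge of the clique expansion graph, equivalently of the graph induced by $\mM$ (indeed $M_{ux}$ is a fixed positive multiple of $W_{ux}$ for $u\neq x$). Since $\hypergraph$ is strongly connected, that graph is strongly connected by \cref{def:conn}, so $\mW$ is irreducible, diagonal entries being irrelevant to this. Now \Cref{thm:pf} gives a vector $\vc > 0$, unique up to scaling, with $\mW\vc = \lambda_1\vc$ where $\lambda_1 > 0$ is the eigenvalue of $\mW$ of largest magnitude; as $\lambda_1$ equals the spectral radius of $\mW$ it is in particular the largest real eigenvalue (and since $\cmT$ is symmetric, $\mW$ is symmetric and $\lambda_1$ is simply its largest eigenvalue). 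This $\vc$ satisfies \cref{eq:des1} with $\lambda = \lambda_1$ and is positive, and by \Cref{thm:pf} it is the only nonnegative eigenvector of $\mW$; hence it is the unique such centrality vector. The $k$-uniform case follows verbatim, with $W_{ux}$ still the number of hyperedges containing $u$ and $x$.

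I do not expect a deep obstacle: the proposition reduces to bookkeeping plus an application of Perron--Frobenius. The one point to get right is the reduction of the neighbor-sum in \cref{eq:des1} to the product $\mW\vc$ --- in particular confirming that the differing summation and slot conventions only perturb the proportionality constant by a fixed factor --- together with the observation that $\mW$ and the representative matrix $\mM$ induce the same graph, so that strong connectivity of $\hypergraph$ transfers to irreducibility of $\mW$ and \Cref{thm:pf} applies.
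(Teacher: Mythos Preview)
Your proposal is correct and follows essentially the same approach as the paper: rewrite the neighbor-sum $\sum_{(u,v,w)\in E}(c_v+c_w)$ as $(\mW\vc)_u$ by collecting coefficients, then invoke Perron--Frobenius (\Cref{thm:pf}) using that strong connectivity of $\hypergraph$ makes $\mW$ irreducible. You are somewhat more explicit than the paper about why $\mW$ is irreducible (via its shared off-diagonal support with the representative matrix $\mM$) and about ordered-versus-unordered summation conventions, but these are elaborations rather than a different route.
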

\begin{proof}
\begin{align*}
\lambda f(c_u) = \lambda c_u = \sum_{(u, v, w) \in E} g(c_v, c_w)
= \sum_{(u, v, w) \in E} c_{v} + c_w
&= \sum_{e \in E \;:\; \{u, x\} \subset e} c_{x} = \sum_{x} W_{ux}c_x.
\end{align*}
Thus, $\lambda \vc = \mW \vc$, and we assumed above that $\lambda > 0$ and $\vc > 0$.
If $\hypergraph$ is strongly connected, then the undirected graph induced by $\mW$
is connected and $\mW$ is irreducible. Applying \cref{thm:pf} says that $\vc$ must
be the eigenvector corresponding to the largest real eigenvalue.
\end{proof}
The matrix $\mW$ was called the ``motif adjacency matrix" by the author in
previous work~\cite{Benson-2016-higher,Yin-2017-local}.
Specifically, it would be the triangle motif adjacency matrix, 
if you  interpret $3$-uniform hyperedges as triangles in some graph.
We give a formal definition for the general case.

\begin{definition}[Clique motif Eigenvector Centrality (CEC)]
Let $\hypergraph$ be a strongly connected $m$-uniform hypergraph.
Then the clique motif eigenvector centrality scores $\vc$ are given by the
eigenvector $\mW\vc = \lambda_1\vc$, where $\|\vc\|_1 = 1$, $W_{uv}$ is the number of hyperedges
containing nodes $u$ and $v$, and $\lambda_1$ is the largest real eigenvalue of $\mW$.
\end{definition}

One interpretation of CEC (and eigenvector centrality for undirected graphs in general) 
is via a best low-rank decomposition. 
Assuming that the graph induced by $\mW$ is non-bipartite (which it will be if $m \ge 3$,
since hyperedges induce cliques in $\mW$), then
$\lambda_1 > 0$ is the unique largest magnitude eigenvalue of the symmetric matrix $\mW$~\cite{Lovasz-2007-eigenvalues}, 
and $\vc$ is also the principal left and right singular vector of $\mW$. Thus, by the Eckart--Young--Mirsky theorem~\cite[Theorem 2.4.8]{Golub-2012-matrix},
$\vc \propto \argmin_{\|\vx\| \in \mathbb{R}^n} \| \mW - \vx\vx^T \|_F$.
We can also interpret CEC with averaged path counts.
First, observe that
\begin{align*}
\text{\#(length-$\ell$ paths to $u$)} 
= 
\sum_{(u, v, w) \in E} 
\text{\#(length-($\ell$-1) paths to $v$)} +
\text{\#(length-($\ell$-1) paths to $w$)}.
\end{align*}
Let $\vp^{(\ell)}$ be a vector of that counts the number of length-$\ell$ paths ending at each node
from any starting node. Then
\begin{align}
\textstyle p_u^{(1)}    &= \textstyle [\mW\ve]_u = \sum_{(u, v, w) \in E} e_v + e_w = \#((u,v,w) \in E)\; \text{and } \\
\textstyle p_u^{(\ell)} &= \textstyle [\mW\vp^{(\ell-1)}]_u = \sum_{(u, v, w) \in E} p_v^{(\ell-1)} + p_w^{(\ell-1)},
\end{align}
where $\ve$ is the vector of all ones.
If we think of the CEC vector $\vc$ as the limit of the power method algorithm, then $\vc$ can
be interpreted as the steady state of a weighted average of infinite paths through the hypergraph
(see \citet{Benzi-2015-limiting} for a more formal argument).

Computing the CEC vector is often straightforward.
If $\hypergraph$ is strongly connected,
then the undirected graph induced by $\mW$ is connected.
If this graph is also non-bipartite (which, again, must be the case for $m$-uniform hypergraphs when $m \ge 3$),
then the eigenvalue in \cref{thm:pf} is the unique eigenvalue of largest magnitude.
In this case, we can we can use the power method to reliably compute $\vc$.

One subtlety is that the eigenvector is only defined up to its magnitude.
Usually, this issue is ignored, under the argument that only relative order
matters for ranking problems. However, we should be conscientious when
centrality scores are used as features in machine learning.
For example, the scale of a centrality vector as a node feature would affect common tasks such
as principal component analysis, where scale changes variance.
(These issues can also be alleviated by pre-processing techniques, such
as normalizing features to have zero mean and unit variance, although such pre-processing
is not always employed.)
 In this paper, to make the presentation simple,
we assume that centrality vectors are scaled to have unit $1$-norm.

\subsection{$Z$-eigenvector centrality (ZEC)}

To actually incorporate non-linearity, we can keep the innocuous choice $f(c_u) = c_u$ 
but change $g$ to the product form: $g(c_v, c_w) = c_vc_w$ in \cref{eq:des1}.
Now, the contribution of the centralities of two nodes in a 3-node hyperedge is multiplicative for the third.
This leads to the following system of nonlinear equations for a 3-uniform hypergraph:
\begin{equation}\label{eq:z_evec_centr}
c_u = \frac{1}{\lambda} \sum_{(u, v, w) \in E} c_vc_w,\; u \in V \iff \cmT\vc^2 = 2\lambda \vc.
\end{equation}
Here, $\cmT\vc^2$ is short-hand for a vector with $[\cmT\vc^2]_i \equiv \sum_{j,k}\cT_{i,j,k}c_jc_k$
(similarly, for an order-$m$ tensor, $[\cmT \vc^{m-1}]_i \equiv \sum_{j_2, \ldots, j_m} \cT_{i,j_2,\ldots,j_m}c_{j_2} \cdots c_{j_m}$).
The extra factor of 2 comes from the symmetry in the adjacency tensor (for an order-$m$ tensor, this extra factor is $(m -1)!$).

A real-valued solution $(\vc, \lambda)$ with $\vc \neq 0$ to \cref{eq:z_evec_centr} 
is called a tensor $Z$-eigenpair~\cite{Qi-2005-Z-evec} or a tensor $l^2$ eigenpair~\cite{Lim-2005-variational} (we will use the ``$Z$'' terminology).
At first glance, it is unclear if such an eigenpair even exists, let alone a positive one.
Assuming the hypergraph is strongly connected, \citet{Chang-2008-Tensor-PF} 
proved a Perron-Frobenius-like theorem
that gives us the existence of a positive solution $\vc$.

\begin{theorem}[Perron-Frobenius for $Z$-eigenvectors---Corollary 5.10 of \cite{Chang-2008-Tensor-PF}%
\footnote{An erratum was published for this result, but the error does not affect our statement or analysis. 
See \citet[Theorem 2.6]{Chang-2013-Z} from the same authors for the corrected result.}%
]\label{thm:PFZ}
  Let $\cmT$ be an order-$m$ irreducible nonnegative tensor.
  Then there exists a $Z$-eigenpair
  $(\vx, \lambda_1)$ satisfying $\cmT\vx^{m-1}=\lambda_1\vx$ such that $\lambda_1>0$ and $\vx > 0$.
\end{theorem}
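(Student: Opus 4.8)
The plan is to prove the statement by a variational argument tailored to the symmetric tensors that actually arise here (the hypergraph adjacency tensor of \cref{eq:adj_tensor} is symmetric). For symmetric $\cmT$ the vector $\cmT\vx^{m-1}$ is, up to a scalar, the gradient of the degree-$m$ form $f(\vx) := \cmT\vx^m = \sum_{i_1,\ldots,i_m}\cmT_{i_1,\ldots,i_m}x_{i_1}\cdots x_{i_m}$, since $\nabla f(\vx) = m\,\cmT\vx^{m-1}$; hence $Z$-eigenvectors on the unit sphere coincide with the critical points of $f$ restricted to the sphere. First I would set $K = \{\vx\in\mathbb{R}^n : \vx\ge 0,\ \|\vx\|_2 = 1\}$, which is compact and nonempty, so the continuous function $f$ attains its maximum over $K$ at some $\vx^*\ge 0$; the proposed eigenpair is $(\vx^*,\lambda_1)$ with $\lambda_1 := f(\vx^*)$.

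The core of the argument is to show $\vx^* > 0$, and this is the one place irreducibility is needed. Suppose instead that $S := \{i : x_i^* = 0\}$ is nonempty; it is also a proper subset because $\vx^*\ne 0$. By \cref{def:irr} there is an index $i_0\in S$ and indices $j_2,\ldots,j_m\notin S$ with $\cmT_{i_0,j_2,\ldots,j_m} > 0$, and since $x_{j_2}^*,\ldots,x_{j_m}^* > 0$ this forces $[\cmT(\vx^*)^{m-1}]_{i_0} = \sum_{j_2,\ldots,j_m}\cmT_{i_0,j_2,\ldots,j_m}x_{j_2}^*\cdots x_{j_m}^* > 0$. I would then perturb along coordinate $i_0$: for small $t>0$ let $\tilde{\vx}$ be $\vx^*$ with its $i_0$-th entry raised from $0$ to $t$, and put $\vx(t) := \tilde{\vx}/\|\tilde{\vx}\|_2 \in K$. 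Because $x_{i_0}^* = 0$ we get $\|\tilde{\vx}\|_2^2 = 1 + t^2$, while multilinearity together with symmetry gives $f(\tilde{\vx}) = f(\vx^*) + t\,m\,[\cmT(\vx^*)^{m-1}]_{i_0} + O(t^2)$; therefore $f(\vx(t)) = f(\tilde{\vx})(1+t^2)^{-m/2} = f(\vx^*) + t\,m\,[\cmT(\vx^*)^{m-1}]_{i_0} + O(t^2) > f(\vx^*)$ for $t$ small and positive, contradicting maximality of $\vx^*$. Hence $S = \emptyset$ and $\vx^* > 0$.

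Given $\vx^* > 0$, the point $\vx^*$ is interior to $K$ relative to the unit sphere, so it is an ordinary critical point of $f$ constrained to $\{\|\vx\|_2 = 1\}$; the Lagrange condition $\nabla f(\vx^*) = 2\mu\vx^*$ becomes $m\,\cmT(\vx^*)^{m-1} = 2\mu\vx^*$, i.e.\ $\cmT(\vx^*)^{m-1} = \lambda_1\vx^*$ with $\lambda_1 = 2\mu/m$. Contracting this equation with $\vx^*$ and using $\|\vx^*\|_2 = 1$ gives $\lambda_1 = \cmT(\vx^*)^m = f(\vx^*)$, and $f(\vx^*) = \max_{\vx\in K} f(\vx) \ge f(\ve/\sqrt{n}) = n^{-m/2}\sum_{i_1,\ldots,i_m}\cmT_{i_1,\ldots,i_m} > 0$, since an irreducible nonnegative tensor must be nonzero (if $\cmT\equiv 0$ every nonempty proper $S$ witnesses reducibility). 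This establishes $\vx^* > 0$ and $\lambda_1 > 0$ for symmetric $\cmT$, which is exactly the case used by the hypergraph centralities in this paper.

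The step I expect to be the genuine obstacle is strict positivity of the maximizer; the remainder is routine bookkeeping with Lagrange multipliers and homogeneity. For a general, possibly non-symmetric, nonnegative irreducible $\cmT$ the variational route breaks down because $\cmT\vx^{m-1}$ is no longer a gradient, and one instead argues by a fixed point: regularize to $\cmT_\varepsilon := \cmT + \varepsilon\mathcal{E}$ with $\mathcal{E}$ the all-ones tensor, apply Brouwer's theorem to the continuous self-map $\vx\mapsto \cmT_\varepsilon\vx^{m-1}/\|\cmT_\varepsilon\vx^{m-1}\|_1$ of the unit simplex to get a strictly positive $Z$-eigenpair of $\cmT_\varepsilon$, and then take a convergent subsequence as $\varepsilon\to 0$; the delicate part there is, once more, using irreducibility of $\cmT$ to keep the limiting eigenvector strictly positive and the limiting eigenvalue positive rather than $0$. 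This is the route taken by \citet{Chang-2008-Tensor-PF}.
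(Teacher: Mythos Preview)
The paper does not prove \cref{thm:PFZ}; it is quoted without proof as Corollary~5.10 of \citet{Chang-2008-Tensor-PF}, so there is no in-paper argument to compare against. Your variational proof for the symmetric case is correct and is essentially Lim's original route~\cite{Lim-2005-variational}: maximize the degree-$m$ form over the nonnegative part of the unit sphere, use irreducibility together with a first-order perturbation to force the maximizer strictly into the interior, and then read off the $Z$-eigenvector equation from the Lagrange condition. The positivity of $\lambda_1$ via $f(\vx^*)\ge f(\ve/\sqrt{n})>0$ is also handled correctly. You rightly note that the non-symmetric case needs a different tool---a Brouwer fixed-point argument with a limiting procedure---and that this is indeed how \citet{Chang-2008-Tensor-PF} proceed. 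Since every hypergraph adjacency tensor appearing in this paper is symmetric, your restricted argument already covers every application of the theorem made here.
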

Unlike the case with matrices, there can be multiple positive $Z$-eigenvectors, even for the same eigenvalue~\cite[Example 2.7]{Chang-2013-Z}.
With this tensor Perron-Frobenius theorem in hand, we can define $Z$-eigenvector centrality for hypergraphs.
To manage the uniqueness issue, we consider any positive solution to be a centrality vector.

\begin{definition}[$Z$-eigenvector centrality (ZEC)]\label{def:zec_vec}
Let $\hypergraph$ be a strongly connected $m$-uniform hypergraph with adjacency tensor $\cmT$.
Then a $Z$-eigenvector centrality vector for $\hypergraph$ is \emph{any} positive vector $\vc$ 
satisfying $\cmT\vc^{m-1} = \lambda \vc$ and $\|\vc\|_1 = 1$ for some eigenvalue $\lambda > 0$.
\end{definition}

Analogous to the CEC (or standard graph) case, there is a ZEC vector that is a best low-rank approximation.
To prove this, we first need the following lemma.
\begin{lemma}\label{lem:positivity}
Let $\cmT$ be an irreducible symmetric nonnegative tensor and suppose that $\vx$ is a nonnegative $Z$-eigenvector
of $\cmT$ with positive eigenvalue $\lambda > 0$. Then $\vx$ is positive.
\end{lemma}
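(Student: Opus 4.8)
The plan is to argue by contradiction straight from \cref{def:irr}. Suppose $\vx \ge 0$ is a $Z$-eigenvector of $\cmT$ with $\cmT\vx^{m-1} = \lambda\vx$ and $\lambda > 0$, but that $\vx$ is not strictly positive. Set $S = \{\, i : x_i = 0 \,\}$. Since an eigenvector is nonzero, $S$ is a \emph{proper} subset of $\{1,\ldots,n\}$; by the contradiction hypothesis it is also \emph{non-empty}. I will show that $S$ exhibits the reducibility of $\cmT$, contradicting the assumption that $\cmT$ is irreducible.

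To do this, fix any $i \in S$. Evaluating the eigenvector equation in coordinate $i$ and using $x_i = 0$ gives
\[
0 \;=\; \lambda x_i \;=\; [\cmT\vx^{m-1}]_i \;=\; \sum_{j_2,\ldots,j_m} \cT_{i,j_2,\ldots,j_m}\, x_{j_2}\cdots x_{j_m}.
\]
Because $\cmT$ is nonnegative and $\vx$ is nonnegative, every term in this sum is nonnegative, so every term must vanish. Now pick any indices $j_2,\ldots,j_m \notin S$ (such indices exist since $\vx \neq 0$); then $x_{j_2},\ldots,x_{j_m} > 0$, hence $x_{j_2}\cdots x_{j_m} > 0$, and therefore $\cT_{i,j_2,\ldots,j_m} = 0$. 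Since $i \in S$ was arbitrary, this is exactly the condition in \cref{def:irr} making $\cmT$ reducible via the set $S$ --- a contradiction. Hence $S = \emptyset$, i.e.\ $\vx > 0$.

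I do not anticipate a real obstacle here: the only care needed is the bookkeeping that $S$ is both non-empty and proper (both facts coming from $\vx$ being a nonzero eigenvector), and noting that $\lambda > 0$ plays essentially no role beyond the trivial identity $\lambda x_i = 0$ for $i \in S$. It is also worth remarking that symmetry of $\cmT$ is not actually used in this argument, though it is available by hypothesis; the same proof works for any irreducible nonnegative tensor. This is the standard Perron--Frobenius-style ``support is all of $\{1,\ldots,n\}$'' argument specialized to the $Z$-eigenvector equation.
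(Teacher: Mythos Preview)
Your proof is correct. It is the natural contrapositive of the paper's argument, but packaged more directly: the paper picks an index $i$ with $x_i>0$ and then propagates positivity along the graph induced by the representative matrix $\mM$ (\cref{def:conn}), arguing that every index reachable from $i$ must have positive coordinate and then invoking strong connectivity; you instead take the zero set $S=\{i:x_i=0\}$ and verify in one stroke that it satisfies the reducibility condition of \cref{def:irr}. Your route is a bit more economical since it never needs to speak of $\mM$ or reachability, and --- as you correctly observe --- it makes no use of symmetry, so the same argument covers arbitrary irreducible nonnegative tensors. The paper's propagation viewpoint, on the other hand, makes the connection to the hypergraph's combinatorial connectivity more visible, which is thematically useful elsewhere in the paper. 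Your remark that $\lambda>0$ is essentially unused is also accurate: once $x_i=0$ for $i\in S$, the identity $\lambda x_i=0$ holds for any $\lambda$, so the positivity of $\lambda$ plays no role in your version.
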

\begin{proof}
The proof technique follows \citet[Lemma 21]{Qi-2016-nonnegative}.
Since $\lambda > 0$ and $\vx \ge 0$, there must be some coordinate $i$ such that $x_i > 0$.
By \cref{eq:z_evec_centr} and nonnegativity of $\cmT$,
\begin{equation}
0 < \lambda x_i = \sum_{j_2, \ldots, j_m \colon \cT_{i, j_2, \ldots, j_m} > 0 } \cT_{i, j_2, \ldots, j_m} x_{j_2} \cdots x_{j_m} \implies x_{j_2}, \ldots, x_{j_m} > 0
\end{equation}
Therefore, $x_r > 0$ for any index $r \in \{(i, j_2, \ldots, j_m) \;\vert\; \cT_{i, j_2, \ldots, j_m} > 0\}$.
Iterating this argument shows that $x_s > 0$ for any index $s$ reachable from $i$ in the graph induced by 
the representation matrix
$M_{ij} = \sum_{j_2, \ldots, j_n} \cT_{i,j,j_2, \ldots, j_n}$. Since $\cmT$ is irreducible, this is all indices, so $\vx > 0$.
\end{proof}

The following theorem says that there the ZEC vector is proportional to a best rank-1 approximation vector of
the hypergraph adjacency tensor. However, neither ZEC vectors nor best rank-1 approximations 
need be unique~\cite{Friedland-2014-number}.
\begin{theorem}\label{thm:main}
Let $\hypergraph$ be an $m$-uniform strongly connected hypergraph with symmetric adjacency tensor $\cmT$.
Then there is a ZEC vector $\vc \propto \vv$, where 
$\vv \in \argmin_{\vx \in \mathbb{R}^n} \| \cmT - \otimes^m \vx \|_F$
and $\otimes^m \vx$ is the order-$m$ symmetric tensor $\cmS$ defined by 
$\cS_{i_1, \ldots, i_m} = x_{i_1} \cdots x_{i_m}$.
\end{theorem}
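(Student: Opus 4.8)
The plan is to expand the Frobenius objective, reduce the best symmetric rank-one approximation problem to maximizing $\cmT\vx^{m}$ over the unit Euclidean sphere, and then extract a positive $Z$-eigenvector from the first-order optimality conditions using \cref{lem:positivity}.

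First I would expand, for any $\vx\in\mathbb{R}^n$,
\[
\|\cmT - \otimes^m\vx\|_F^2 = \|\cmT\|_F^2 - 2\langle\cmT,\otimes^m\vx\rangle + \|\otimes^m\vx\|_F^2 = \|\cmT\|_F^2 - 2\,\cmT\vx^m + \|\vx\|_2^{2m},
\]
where the second equality uses symmetry of $\cmT$, so $\langle\cmT,\otimes^m\vx\rangle = \cmT\vx^m := \sum_{i_1,\dots,i_m}\cT_{i_1,\dots,i_m}x_{i_1}\cdots x_{i_m}$, together with $\|\otimes^m\vx\|_F^2 = (\vx^T\vx)^m$. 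Hence minimizing $\|\cmT - \otimes^m\vx\|_F$ is equivalent to maximizing $2\,\cmT\vx^m - \|\vx\|_2^{2m}$. Writing $\vx = t\vu$ with $t\ge 0$ and $\|\vu\|_2=1$ reduces this to maximizing $2t^m\,\cmT\vu^m - t^{2m}$; since $\cmT\ge 0$ is nonzero (the hypergraph has a hyperedge), $\cmT(\ve/\sqrt n)^m>0$, so an optimal direction has $\cmT\vu^m>0$, the optimal radius is then $t = (\cmT\vu^m)^{1/m}$ with objective value $(\cmT\vu^m)^2$, and a minimizer of the original problem is $\vv = (\cmT(\vu^\star)^m)^{1/m}\,\vu^\star$, where $\vu^\star$ maximizes $\cmT\vu^m$ over the unit sphere. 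Existence of $\vu^\star$ is by compactness.

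Next I would show $\vu^\star$ can be taken positive. Because $\cmT\ge 0$, replacing $\vu$ by its entrywise absolute value $|\vu|$ leaves $\|\vu\|_2$ unchanged and cannot decrease $\cmT\vu^m$, so a maximizer $\vu^\star\ge 0$ exists. The unit sphere is a smooth manifold without boundary, so first-order optimality gives a Lagrange multiplier $\mu$ with $\cmT(\vu^\star)^{m-1}=\mu\,\vu^\star$; contracting with $\vu^\star$ yields $\mu = \cmT(\vu^\star)^m > 0$. Thus $\vu^\star$ is a nonnegative $Z$-eigenvector of the irreducible symmetric nonnegative tensor $\cmT$ with positive eigenvalue, and \cref{lem:positivity} upgrades this to $\vu^\star>0$.

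Finally I would package it back up: scaling $\cmT(\vu^\star)^{m-1}=\mu\,\vu^\star$ by $t=\mu^{1/m}>0$ gives $\cmT\vv^{m-1}=t^{m-1}\mu\,\vu^\star = \mu t^{m-2}\,\vv$, so $\vv>0$ satisfies $\cmT\vv^{m-1}=\lambda\vv$ with $\lambda=\mu t^{m-2}>0$; normalizing $\vc=\vv/\|\vv\|_1$ keeps $\vc>0$ and, by homogeneity of $\vx\mapsto\cmT\vx^{m-1}$, preserves a positive eigenvalue, so $\vc$ is a ZEC vector by \cref{def:zec_vec} with $\vc\propto\vv\in\argmin_{\vx}\|\cmT-\otimes^m\vx\|_F$. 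I do not anticipate a deep obstacle; the points needing care are the reduction to the sphere (so that the \emph{scale} of the approximant does not interfere with locating the maximizer) and checking that the eigenvalue stays strictly positive throughout, which is precisely what lets \cref{lem:positivity} apply and makes the output a genuine ZEC vector. One should also note that for even $m$, $\otimes^m(-\vv)=\otimes^m\vv$, so the $\argmin$ set contains the non-positive vector $-\vv$ as well; the theorem only claims the existence of a suitable minimizer, which is exactly what the construction produces.
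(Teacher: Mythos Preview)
Your argument is correct and takes a more self-contained route than the paper. The paper assembles the result from four cited facts---best symmetric rank-$1$ approximants are $Z$-eigenvectors of largest magnitude eigenvalue, best rank-$1$ approximants of symmetric tensors can be taken symmetric, best rank-$1$ approximants of nonnegative tensors can be taken nonnegative, and the largest $Z$-eigenvalue in magnitude is positive---and then applies \cref{lem:positivity}. You instead derive the needed ingredients directly: expanding the Frobenius objective, reducing to the sphere via the radial optimization, taking entrywise absolute values to obtain a nonnegative maximizer, and reading off the $Z$-eigenvector equation from Lagrange multipliers on the full sphere (which avoids the boundary issues that would arise if one optimized only over the nonnegative orthant). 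The final use of \cref{lem:positivity} is identical in both proofs. Your approach makes the variational structure explicit and avoids external citations; the paper's approach additionally records that the eigenvalue produced is the one of largest magnitude, which your construction also yields implicitly (for any unit-norm $Z$-eigenpair $(\vx,\lambda)$ one has $|\lambda|=|\cmT\vx^m|\le\cmT|\vx|^m\le\cmT(\vu^\star)^m=\mu$), though neither the theorem nor \cref{def:zec_vec} requires this.
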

\begin{proof}
The proof combines several prior results on tensors with \cref{lem:positivity}.
First, any best \emph{symmetric} rank-1 approximation to a symmetric
tensor is a tensor $Z$-eigenvector with largest magnitude eigenvalue~\cite[Theorem 3]{Kofidis-2002-best}.
Second, the best rank-1 approximation to a symmetric tensor can
be chosen symmetric~\cite[Theorem 4.1]{Chen-2012-Opt}.
Thus, a best rank-1 approximation can be chosen to be the $Z$-eigenvector with largest magnitude eigenvalue.
Third, the coordinate values of any best rank-1 approximation to a nonnegative tensor can be chosen
so that its entries are nonnegative~\cite[Theorem 16]{Qi-2016-nonnegative}, so there is a nonnegative
eigenvector with largest magnitude eigenvalue.
Fourth, the largest $Z$-eigenvalue in magnitude is positive~\cite[Theorem 3.11 and Corollary 3.12]{Chang-2013-Z}.
Finally, \cref{lem:positivity} says that the corresponding eigenvector must be positive.
\end{proof}

Computing tensor $Z$-eigenvectors is much more challenging than computing matrix eigenvectors;
computing a best symmetric rank-1 approximation to a tensor is NP-hard~\cite[Theorem 10.2]{Hillar-2013-nphard}.
Adjacency tensors of hypergraphs are symmetric tensors, so we might first try the symmetric higher-order power method,
an analog of the power method for matrices; however, such methods are not guaranteed to converge~\cite{DeLathauwer-2000-best,Regalia-2000-hopm,Kofidis-2002-best}.
A shifted symmetric higher-order power method with an appropriate shift guarantees convergence to 
\emph{some} $Z$-eigenpair~\cite{Kolda-2011-sshopm,Kolda-2014-GEAP}
but can only converge to a class of so-called ``stable'' eigenpairs.\footnote{Let
$(\lambda, \vx)$ be an eigenpair of an order-$m$ symmetric tensor $\cmT$ with $\| \vx \|_2 = 1$ and $\mU$
be an orthonormal basis of the subspace orthogonal to $\vx$. Then  the eigenpair is \emph{unstable} if
$\mU^T((m-1)\cmT[\vx] - \lambda \vx)\mU$ is indefinite, where
 $\cmT[\vx]$ is the matrix with $(i, j)$ entry $\sum_{j_3, \ldots, j_n} T_{i,j,j_3,\ldots,j_n} x_{j_3} \cdots x_{j_n}$. 
 An eigenpair is stable if it is not unstable.}
It turns out that ZEC vectors can be unstable, which hinders our reliance on these algorithms.

\begin{example}\label{ex:sshopm_counter}
The following strongly connected 3-uniform hypergraph has a ZEC vector $\vc$ that is
an unstable eigenvector in the sense of \citet{Kolda-2011-sshopm}:
\begin{center}\includegraphics[height=2cm]{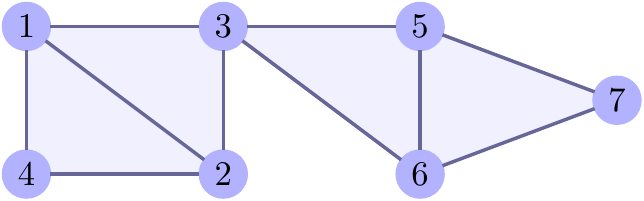}\end{center}
Indeed, one can verify that $(\vx, \sqrt{2})$ is a $Z$-eigenpair, where $\| \vx \|_2 = 1$ with
$x_1 = x_2 = x_5 = x_6 = \sqrt{6} / 6$, $x_4 = x_7 = \sqrt{2} / 6$, and $x_3 = \sqrt{2} / 3$.
Some simple calculations following \citet[Definition 3.4]{Kolda-2011-sshopm} show that $\vx$ is
an unstable $Z$-eigenvector.
\end{example}

There are algorithms based on semi-definite programming hierarchies that are guaranteed to
compute the eigenvectors~\cite{Cui-2014-real,Nie-2014-sdp,Nie-2017-eigenvalues}, but these
methods do not scale to the data problems we explore in \cref{sec:data}.
Recent work by the author develops a method to compute $Z$-eigenpairs using dynamical systems,
which can scale to large tensors and also compute unstable eigenvectors~\cite{Benson-2018-Z}, 
albeit without theoretical guarantees on convergence. 
In fact, we used this method to discover the example in \cref{ex:sshopm_counter}.
We use this algorithm for our computational experiments.

\subsection{$H$-eigenvector centrality (HEC)}

A reasonable qualm with ZEC is that the dimensional analysis is nonsensical---if centrality 
is measured in some ``unit,'' then \cref{eq:z_evec_centr} says that a unit of centrality is equal 
to the sum of the product of that same unit.
With this in mind, we might choose $f(c_u) = c_u^2$ and $g(c_v, c_w) = c_vc_w$ in 
\cref{eq:des1} to satisfy dimensional analysis:
\begin{equation}\label{eq:h_evec_centr}
c_u^2 = \frac{1}{\lambda} \sum_{(u, v, w) \in E} c_vc_w,\; u \in V \iff \cmT\vc^2 = 2 \lambda \vc^{[2]}.
\end{equation}
Here, $\vc^{[k]}$ is short-hand notation for the entry-wise $k$th power of a vector.\footnote{Written as \verb+c .^ k+ in Julia or MATLAB.}
Again, the extra factor of 2 comes from the symmetry in the adjacency tensor.

A real-valued solution $(\vc, \lambda)$ to \cref{eq:h_evec_centr} with $\vc \neq 0$
is called a tensor $H$-eigenpair~\cite{Qi-2005-Z-evec} or a tensor $l^k$-eigenpair~\cite{Lim-2005-variational} (we will use the ``$H$'' terminology).
Again, we can employ tensor Perron-Frobenius theory for the existence of a positive solution with positive eigenvalue.

\begin{theorem}[Perron-Frobenius for $H$-eigenvectors---Theorem 1.4 of \cite{Chang-2008-Tensor-PF}]\label{thm:PFH}
  Let $\cmT$ be an order-$m$ irreducible tensor. Then there exists an $H$-eigenpair
  $(\vx, \lambda_1)$ with $\vx > 0$ and $\lambda_1 > 0$.
  Moreover, any nonnegative $H$-eigenvector also has eigenvalue $\lambda_1$,
  such vectors are unique up to scaling, and $\lambda_1$ is the largest eigenvalue in magnitude.
\end{theorem}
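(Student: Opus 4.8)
The plan is to follow the classical route to tensor Perron--Frobenius theory: first establish existence of a positive $H$-eigenpair by a fixed-point argument, using a strict-positivity perturbation to control the boundary of the simplex; then upgrade nonnegativity of the limiting eigenvector to strict positivity with the support-propagation argument already used in \cref{lem:positivity}; and finally extract uniqueness up to scaling and the extremality of $\lambda_1$ from a one-sided comparison inequality in which irreducibility does all the work.

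For existence, I would perturb $\cmT$ to $\cmT_\epsilon=\cmT+\epsilon\,\mathcal{E}$, where $\mathcal{E}$ is the all-ones order-$m$ tensor, so that $\cmT_\epsilon$ is entrywise positive for $\epsilon>0$. On the simplex $\Delta=\{\vx\ge 0:\ve^T\vx=1\}$ the map $\vx\mapsto\vy/\|\vy\|_1$ with $\vy=(\cmT_\epsilon\vx^{m-1})^{[1/(m-1)]}$ is well defined (because $\cmT_\epsilon\vx^{m-1}>0$ for every $\vx\in\Delta$), continuous, and carries $\Delta$ into its interior, so Brouwer's fixed-point theorem yields $\vx_\epsilon>0$ and $\lambda_\epsilon>0$ with $\cmT_\epsilon\vx_\epsilon^{m-1}=\lambda_\epsilon\vx_\epsilon^{[m-1]}$. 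Testing this relation at the largest and smallest coordinates of $\vx_\epsilon$ shows $\min_i(\cmT\ve^{m-1})_i\le\lambda_\epsilon\le\max_i(\cmT_\epsilon\ve^{m-1})_i$, and the lower bound is strictly positive because irreducibility forces every ``row'' $\{\cT_{i,j_2,\ldots,j_m}\}_{j_2,\ldots,j_m}$ of $\cmT$ to contain a nonzero entry. A subsequential limit as $\epsilon\to 0$ then gives $\vx_0\ge 0$, $\vx_0\neq 0$ and $\lambda_1>0$ with $\cmT\vx_0^{m-1}=\lambda_1\vx_0^{[m-1]}$, whereupon the argument of \cref{lem:positivity} applies verbatim (using $x_i^{m-1}>0\iff x_i>0$) to give $\vx_0>0$; the same reasoning shows that any nonnegative $H$-eigenvector with a positive eigenvalue is in fact positive.

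The substance is uniqueness and extremality, which I would obtain by a comparison. Let $\vx$ (eigenvalue $\lambda_1$) and $\vy$ (eigenvalue $\mu>0$) be positive $H$-eigenvectors and put $t=\min_i x_i/y_i$, so that $\vx\ge t\vy$ with equality at some index $i_0$. Monotonicity of $\vx\mapsto\cmT\vx^{m-1}$ on the nonnegative orthant together with homogeneity gives $\lambda_1\vx^{[m-1]}=\cmT\vx^{m-1}\ge\cmT(t\vy)^{m-1}=t^{m-1}\mu\,\vy^{[m-1]}$, and reading off coordinate $i_0$ forces $\lambda_1\ge\mu$; swapping $\vx$ and $\vy$ gives $\mu\ge\lambda_1$, hence $\mu=\lambda_1$. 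With $\mu=\lambda_1$, equality holds in the displayed inequality at every index of $S^c:=\{i:x_i=t y_i\}$, and since each summand $\cT_{i,j_2,\ldots,j_m}\,(x_{j_2}\cdots x_{j_m}-t^{m-1}y_{j_2}\cdots y_{j_m})$ is nonnegative, equality at $i\in S^c$ forces $\cT_{i,j_2,\ldots,j_m}=0$ whenever some $j_\ell\notin S^c$. If $\vx\neq t\vy$ then $S^c$ is a proper nonempty subset of $\{1,\ldots,n\}$ (proper by maximality of $t$, nonempty since $i_0\in S^c$), and this vanishing is exactly reducibility with respect to $S^c$ in the sense of \cref{def:irr}, a contradiction; hence $\vx=t\vy$, so nonnegative $H$-eigenvectors are unique up to scaling and all share the eigenvalue $\lambda_1$. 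Finally, for any eigenpair $(\vz,\sigma)$ (possibly complex), taking entrywise absolute values gives $\cmT|\vz|^{m-1}\ge|\sigma|\,|\vz|^{[m-1]}$, and running the same comparison with the Perron vector $\vx$ and $\vw=|\vz|$ in the role of $t\vy$ yields $\lambda_1\ge|\sigma|$, so $\lambda_1$ is the largest eigenvalue in magnitude.

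The step I expect to be the main obstacle is the boundary behaviour in the two limiting/extremal arguments: making the fixed-point map genuinely well defined (which is why the strict-positivity perturbation is needed, since $\cmT\vx^{m-1}$ can vanish at boundary points such as the standard basis vectors), ruling out $\lambda_1=0$ in the $\epsilon\to 0$ limit, and, in the uniqueness argument, correctly matching the ``equality forces zeros'' conclusion to \cref{def:irr} for the right index set. These are precisely the places where irreducibility --- rather than mere nonnegativity --- is indispensable, mirroring the role of strong connectivity in the graph case.
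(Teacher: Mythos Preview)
The paper does not prove this theorem at all: it is quoted verbatim as Theorem~1.4 of \citet{Chang-2008-Tensor-PF} and used as a black box, so there is no in-paper argument to compare against. Your proposal is not a comparison target but a reconstruction of the cited result.

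That said, your reconstruction is essentially the standard Chang--Pearson--Zhang argument and is correct in outline. Two small points are worth tightening. First, the claim ``any nonnegative $H$-eigenvector also has eigenvalue $\lambda_1$'' requires ruling out $\mu\le 0$ before you can invoke the \cref{lem:positivity}-style propagation; this is immediate from $\cmT\vy^{m-1}\ge 0$ and $\vy\ge 0$, $\vy\neq 0$, but you should state it. Second, in the spectral-radius step your phrase ``$\vw=|\vz|$ in the role of $t\vy$'' is slightly off: since $|\vz|$ may vanish in some coordinates you cannot take $t=\min_i x_i/|z_i|$. The clean version is to set $s=\max_i |z_i|/x_i$ (well defined because $\vx>0$), so $|\vz|\le s\vx$ with equality at some $i_0$ where $|z_{i_0}|=sx_{i_0}>0$, and then read off $|\sigma|\le\lambda_1$ at $i_0$. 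In the uniqueness paragraph, your set $S^c=\{i:x_i=ty_i\}$ does witness reducibility in the sense of \cref{def:irr}, and in fact you prove the stronger vanishing (some $j_\ell\notin S^c$ suffices), which is more than enough. With these cosmetic fixes the argument is complete.
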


The result is stronger than for $Z$-eigenvectors (\cref{thm:PFZ})---the positive $H$-eigenvector is unique up to scaling.
With this result, we define our third hypergraph eigenvector centrality.

\begin{definition}[$H$-eigenvector centrality (HEC)]
Let $\hypergraph$ be a strongly connected $m$-uniform hypergraph with adjacency tensor $\cmT$.
Then the $H$-eigenvector centrality vector for $\hypergraph$ is the positive real vector $\vc$ 
satisfying $\cmT\vc^{m-1} = \lambda \vc^{m}$ and $\|\vc\|_1 = 1$ for some eigenvalue $\lambda > 0$.
\end{definition}

Computing the HEC vector is considerably easier than computing a ZEC vector.
Simple power-method-like algorithms are guaranteed to converge and work
well in practice~\cite{Liu-2010-algorithm,Ng-2010-algorithm,Zhou-2013-algorithm,Gautier-2017-Perron,Gautier-2018-unifying}.

\subsection{Analysis of an illustrative example: the sunflower with singleton core}

\begin{figure}
\centering
\begin{minipage}[c]{0.22\columnwidth}
\includegraphics[width=0.9\columnwidth]{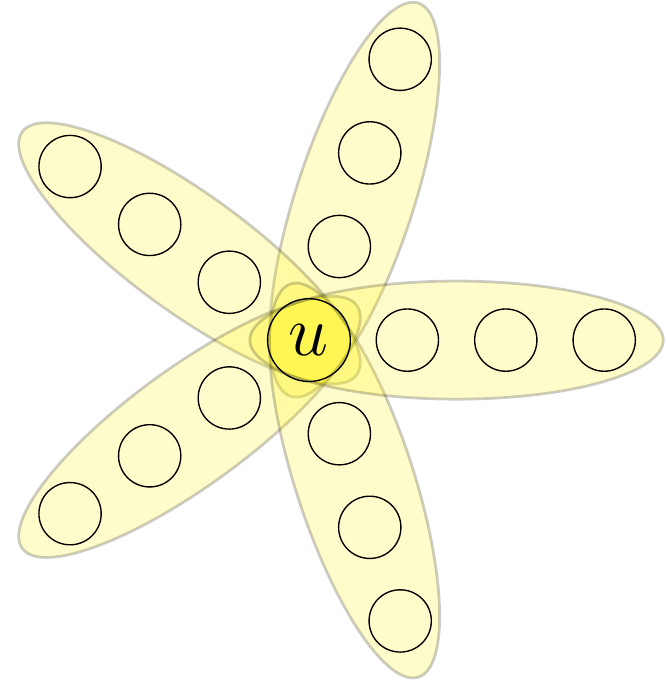}
\end{minipage}
\hfill
\begin{minipage}[c]{0.7\columnwidth}
\begin{tabular}{c c c}
\toprule
centrality & $c_u / c_v$ & $\lim_{m \to \infty} c_u / c_v$ \\ \midrule
CEC & $\frac{2r(m-1)}{\sqrt{m^2 + 4(m - 1)(r - 1)} + m - 2}$ & $2r$ \\
ZEC  & $r^{1/2}$\phantom{!}\raisebox{1ex}{$\ast$}
& $r^{1/2}$ \\
HEC & $r^{1/m}$\phantom{!} & $1$ \\
\bottomrule
\end{tabular}
\end{minipage}
\vspace{-3mm}
\caption{%
Hypergraph eigenvector centralities on sunflowers with singleton cores.
{\bf (Left)} A $4$-uniform, $5$-petal sunflower hypergraph with core $C = \{u\}$. 
Each petal is a hyperedge, marked here with a yellow-shaded ellipse around the nodes.
{\bf (Right)} Finite and asymptotic ratios of the centrality of the center node $u$ to
any other node $v$ (which all have the same centrality) in a general $m$-uniform, 
$r$-petal sunflower hypergraph with singleton core.
The ZEC scores have no dependence on $m$, and the HEC scores
tend to uniform as $m$ grows. \\
\raisebox{1ex}{$\ast$}This ratio holds when $m > 3$ and \emph{can}
hold when $m = 3$ (see \cref{prop:ZEC_sunflower}).
}
\label{fig:sunflower}
\end{figure}

A \emph{sunflower} hypergraph has a hyperedge set $E$ with a common pairwise intersection. 
Formally, for any hyperedges (called \emph{petals}) $A,B \in E$, $A \cap B = C$. The common
intersection $C$ is called the \emph{core}.
The sunflower is similar to the star graph, which has been used to
evaluate centralities in social networks~\cite{Ruhnau-2000-evec}.
Here, we use sunflowers as an illustrative example for the behavior of our 
three hypergraph eigenvector centralities.
We specifically consider sunflowers with $r$ petals where the core
is a singleton, i.e., $A \cap B = \{u\}$ for any $A, B \in E$ (\cref{fig:sunflower}, left).
Below, we derive analytic solutions for the centrality of each method (see also \cref{fig:sunflower}, right).
In all cases, the hypergraph centralities ``do the right thing," namely the center
node $u$ has the largest centrality. 
However, the behavior of the three centralities differ.

\paragraph*{CEC}
Let $v$ be some other node than $u$.
We assume that the centrality $c_v$ is equal to some constant $z$ for all nodes $v$ and show
that we get a positive eigenvector. The Perron-Frobenius theorem then gives us uniqueness.
Recall that $r$ is the number of petals in the hypergraph.
Under these assumptions, the CEC equations satisfy
\begin{align*}
\textstyle \lambda c_u 
&= \textstyle \sum_{(u, v_1, \ldots, v_{m-1}) \in E} \sum_{j=1}^{m-1} c_{v_j} \implies \lambda c_u = r(m-1)z \\
\lambda c_v &= c_u + (m - 2)z
\implies \lambda^2z = r(m - 1)z + \lambda(m - 2)z \\
&\implies \lambda = \textstyle \frac{1}{2}\sqrt{m^2 + 4(m-1)(r-1)} + m - 2 
\implies c_u = \textstyle \frac{2r(m-1)}{\sqrt{m^2 + 4(m - 1)(r - 1)} + m - 2}c_v \nonumber
\end{align*}
Since $m \ge 2$ and $c_v > 0$, $c_u$ and $\lambda$ are both positive for any positive value $c_v$.
Some algebra shows that $c_u > c_v$ for finite $m$ and $r \ge 2$:
\begin{align*}
\textstyle 
\frac{c_u}{c_v} = \frac{2r(m-1)}{\sqrt{m^2 + 4(m - 1)(r - 1)} + m - 2} 
= \frac{r}{\sqrt{\frac{1}{4}\left(\frac{m}{m-1}\right)^2 + \frac{r - 1}{m-1}} + \frac{1}{2} \cdot \frac{m - 2}{m-1}} 
> \frac{r}{\sqrt{1 + \frac{r - 1}{m-1}} + \frac{1}{2}} 
\ge
\begin{cases}
\frac{r}{\sqrt{2} + \frac{1}{2}} > 1 & m \ge r \\
\frac{r}{\sqrt{r} + \frac{1}{2}} > 1  &  m < r
\end{cases}.
\end{align*}
Lastly, we only need to choose $c_v > 0$ and normalize so that $\| \vc \|_1 = 1$.

\paragraph*{ZEC}
The Perron-Frobenius theorem for tensor $Z$-eigenvectors does not preclude the existence
of multiple positive eigenvectors with positive eigenvalues.
We indeed see the non-uniqueness for the sunflower, but only in the $3$-uniform case.
We first show the following lemma, which states that the centrality of the non-core nodes
in any petal must be the same.

\begin{lemma}\label{lem:petal}
In any sunflower whose common intersection is a singleton $\{u\}$, the ZEC scores
of all nodes in the same petal---except for $u$---are the same.
\end{lemma}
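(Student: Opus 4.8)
The plan is to apply the $Z$-eigenvector equation $\cmT\vc^{m-1}=\lambda\vc$ coordinate by coordinate to the non-core nodes of a single petal, and to exploit the defining feature of a sunflower with singleton core: every non-core node lies in exactly one hyperedge.

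First I would fix a petal, say $e=\{u,v_1,\ldots,v_{m-1}\}$, and let $\vc$ be a ZEC vector with eigenvalue $\lambda>0$; recall that $\vc>0$ by \cref{def:zec_vec} (equivalently by \cref{lem:positivity}). For a non-core node $v_j$, the only entries $\cT_{v_j,j_2,\ldots,j_m}$ that are nonzero are those whose remaining indices form a permutation of $e\setminus\{v_j\}=\{u\}\cup\{v_i : i\neq j\}$, so expanding $[\cmT\vc^{m-1}]_{v_j}$ collapses to a single term, times a combinatorial factor $(m-1)!$ coming from the symmetry of $\cmT$ which cancels below and plays no role:
\[
\lambda\, c_{v_j} \;=\; (m-1)!\; c_u \prod_{i\neq j} c_{v_i}.
\]
Then I would multiply through by $c_{v_j}$ — legitimate since $\vc>0$ — to get
\[
\lambda\, c_{v_j}^{2} \;=\; (m-1)!\; c_u \prod_{i=1}^{m-1} c_{v_i}.
\]
The right-hand side is symmetric in $v_1,\ldots,v_{m-1}$ and hence independent of $j$; since $\lambda>0$ this forces all $c_{v_j}^{2}$ to coincide, and positivity of $\vc$ then gives $c_{v_1}=\cdots=c_{v_{m-1}}$. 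Repeating the argument for each petal proves the lemma.

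I do not expect a genuine obstacle here: the whole argument rests on the single observation that a non-core node of a singleton-core sunflower belongs to a unique hyperedge, and the only bookkeeping is the tensor-symmetry factor $(m-1)!$, which cancels. The payoff of the lemma is that it reduces the ZEC system on such a sunflower to a low-dimensional one — one unknown $c_u$ for the core plus one unknown per petal — which should be what makes the closed-form entries in \cref{fig:sunflower} tractable in the subsequent proposition.
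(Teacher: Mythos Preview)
Your proof is correct and follows essentially the same approach as the paper: write the $Z$-eigenvector equation for a non-core node $w$ in a petal $P$, multiply through by $c_w$ to obtain $\lambda c_w^2 = c_u\prod_{v\in P\setminus\{u\}} c_v$, observe the right-hand side is independent of $w$, and conclude via positivity. The only cosmetic difference is that the paper writes the ZEC equation as a sum over hyperedges (so the $(m-1)!$ symmetry factor is already absorbed), whereas you carry it explicitly and note that it cancels.
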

\begin{proof}
Let $c_u$ be the centrality score of node $u$.
Let $w$ be any other node in an arbitrary petal $P$, whose centrality score is $c_w$.
The ZEC equations satisfy
\begin{equation}\label{eq:zec_uniq1}
\textstyle \lambda c_w = c_u \prod_{v \in P \backslash \{u, w\}} c_v \implies \lambda c_w^2 = c_u \prod_{v \in P \backslash \{u\}}c_v = \lambda c_{w'}^2 \text{ for any $w' \in P$}.
\end{equation}
\end{proof}
We next characterize exactly when the sunflower has a unique ZEC vector.
\begin{proposition}\label{prop:ZEC_sunflower}
Let $\hypergraph$ be an $m$-uniform sunflower with singleton core $\{u\}$ and petals $\{P\}$.
\begin{enumerate}
\item If $m \neq 3$, the unique $Z$-eigenvector centrality score $\vc$ for $\hypergraph$
is given by $c_u / c_v = \sqrt{r}$, where $v$ is any node other than $u$ and $c_v$ is a constant over nodes $v \neq u$.
\item If $m = 3$, there are infinite $Z$-eigenvector centrality scores for $\hypergraph$;
any vector with $c_v = c_P$ for $v \in P, v \neq u$, and $c_u = \sqrt{\sum_{P}c_P^2}$ are $Z$-eigenvector centrality scores.
\end{enumerate}
\end{proposition}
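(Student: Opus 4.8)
The plan is to invoke \cref{lem:petal} to collapse the unknowns down to the single core value $c_u$ and one value $c_P$ per petal, after which the $Z$-eigenvector equations are rigid enough to solve by hand. First I would note that a sunflower with singleton core is strongly connected: in the representative matrix, $u$ is adjacent to every other node and the non-core nodes of each petal form a clique, so \cref{thm:PFZ} applies, a positive ZEC vector exists, and \cref{lem:petal} is in force. Since ZEC vectors are positive by definition (\cref{def:zec_vec}), I may assume $c_u > 0$ and, writing $c_P > 0$ for the common value of the $m-1$ non-core nodes of petal $P$ given by \cref{lem:petal}, reduce the system to two families of equations for some $\lambda>0$. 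A non-core node $w$ of petal $P$ lies in exactly one hyperedge, giving $\lambda c_P = c_u\, c_P^{\,m-2}$, i.e.
\[
\lambda = c_u\, c_P^{\,m-3}\qquad\text{for every petal } P ;
\]
the core node $u$ lies in all $r$ hyperedges, giving
\[
\lambda c_u = \sum_{P} c_P^{\,m-1} .
\]
The whole proposition hinges on the exponent $m-3$, which vanishes precisely when $m=3$.

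For part~1, suppose $m\neq 3$. The left-hand side of $\lambda = c_u\,c_P^{\,m-3}$ is independent of $P$ and $x\mapsto x^{m-3}$ is strictly monotone on $(0,\infty)$, so all the $c_P$ are forced equal; call the common value $z$. Substituting $\lambda = c_u z^{\,m-3}$ into $\lambda c_u = r z^{\,m-1}$ and simplifying gives $c_u^2 = r z^2$, hence $c_u = \sqrt{r}\,z$ (positive roots). Thus \emph{every} positive $Z$-eigenvector has the claimed shape, and imposing $\|\vc\|_1=1$ determines $z$, so the vector is unique; conversely this vector satisfies both equation families with $\lambda = \sqrt{r}\,z^{\,m-2}>0$, so it is a genuine ZEC vector, which completes part~1.

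For part~2, suppose $m=3$. Now the petal equation collapses to $\lambda = c_u$ and imposes \emph{no} restriction on the $c_P$, while the core equation becomes $c_u^2 = \sum_P c_P^2$. Hence for any positive reals $(c_P)_P$ the vector with petal values $c_P$ and core value $c_u = \sqrt{\sum_P c_P^2}$, paired with $\lambda = c_u>0$, is a positive $Z$-eigenvector; after rescaling to unit $1$-norm it is a ZEC vector. Non-proportional choices of $(c_P)_P$ yield distinct ZEC vectors, so (the sunflower having $r\ge 2$ petals) there are infinitely many, which is part~2.

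I expect the proof to be essentially bookkeeping; the one place requiring care is ensuring we have characterized \emph{all} positive $Z$-eigenvectors rather than merely exhibited some, which is exactly what \cref{lem:petal} buys us — once the unknowns are reduced to $c_u$ and $(c_P)_P$ the two equation families are completely rigid when $m\neq 3$ and leave precisely an $(r-1)$-parameter family when $m=3$. A secondary subtlety is that the $\lambda$ in \cref{thm:PFZ} differs from the $\lambda$ in \cref{eq:z_evec_centr} by a factor of $(m-1)!$; this rescaling affects neither the eigenvectors nor the ratios $c_u/c_v$, so it is harmless, but it is worth flagging when asserting $\lambda>0$.
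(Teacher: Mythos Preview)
Your proposal is correct and follows essentially the same approach as the paper: both use \cref{lem:petal} to reduce to the per-petal values $c_P$, derive $\lambda = c_u c_P^{\,m-3}$ from a non-core node and $\lambda c_u = \sum_P c_P^{\,m-1}$ from the core, then split on whether the exponent $m-3$ vanishes. Your write-up is, if anything, slightly more careful than the paper's---you explicitly note strong connectivity, frame the $m\neq 3$ step via monotonicity of $x\mapsto x^{m-3}$ on $(0,\infty)$ (the paper just asserts $c_P^{m-3}=c_{P'}^{m-3}\Rightarrow c_P=c_{P'}$), pin down uniqueness via the $\|\vc\|_1=1$ normalization, and flag the harmless $(m-1)!$ rescaling.
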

\begin{proof}
By \cref{lem:petal}, each node other than $u$ has centrality $c_P$, where $P$ is
the petal to which the node belongs.
Re-writing \cref{eq:zec_uniq1} in terms of $c_P$ gives
\begin{equation}\label{eq:lambda_ratio_Z}
\textstyle \lambda c_P^2 = c_u \prod_{v \in P \backslash \{u\}}c_P = c_uc_P^{m-1}
\implies \lambda / c_u = c_P^{m-3} 
\end{equation}
This implies that $c_P^{m-3} = c_{P'}^{m-3}$ for any petals $P$ and $P'$. 
Assume $m \neq 3$. Then $c_P = c_{P'}$ since $c_P > 0$.
Let $z$ be the centrality of any node $v \neq u$.
The ZEC equations satisfy
\begin{align}
\lambda c_u &= \textstyle \sum_{P} z^{m-1} = rz^{m-1} \label{eq:zec1} \text{ and } \lambda z = c_uz^{m-2}
\end{align}
Combining these equations gives $\lambda^2 = rz^{2(m-2)}$, or $\lambda = \sqrt{r}z^{m-2}$.
Plugging this expression for $\lambda$ into \cref{eq:zec1} gives
\begin{align}
\textstyle c_u / c_v = rz^{m-1} / (\lambda z) = rz^{m-1}/(\sqrt{r}z^{m-1}) = \sqrt{r}.
\end{align}

Now assume $m = 3$. Then $\lambda = c_u$ by \cref{eq:lambda_ratio_Z}.
Let $c_P > 0$ be an arbitrary constant for each petal $P$ and define $c_u^2 = \sum_{P} c_P^2$.
We now just check that the $Z$-eigenvector equations hold.
\begin{equation}
\textstyle c_u^2 = \lambda c_u = \sum_{P} \prod_{v \in P \backslash \{u\}} c_v = \sum_{P} c_P^2,
\end{equation}
which holds by the definition of $c_u^2$. Our choice of $c_P$ was any positive real number, and for any node $w \neq u$ in petal $P = \{u, v, w\}$, the ZEC equation is 
$c_uc_P = c_uc_v = \lambda c_w = c_uc_w = c_uc_P$.
\end{proof}

Surprisingly, when $m = 3$, the non-center nodes can have different ZEC scores,
even though the symmetry of the problem would suggest that they would be the same.
Also surprisingly, all scores are independent of $m$, the number of nodes in a hyperedge.
However, ZEC is consistent in the sense that the center node always has the largest centrality score.

\paragraph*{$H$-eigenvector centrality}
\Cref{thm:PFH} gives us uniqueness of a positive vector with positive eigenvalue.
We again assume that $c_v = z$ for any node $v \neq u$.
The HEC equations satisfy
\begin{align}
\textstyle \lambda c_u^{m-1} &= \textstyle  \sum_{(u, v_1, \ldots, v_{m-1})} \prod_{j=1}^{m-1} c_{v_j} 
\implies \lambda c_u^{m-1} = rz^{m-1} \label{eq:hec1} \\
\lambda c_v^{m-1} &= c_uz^{m-2} \implies \lambda z^{m-1} = c_uz^{m-2} 
\implies c_u = \lambda z \label{eq:hec2}.
\end{align}
Plugging in $c_u = \lambda z$ into \cref{eq:hec1} gives $\lambda (\lambda z)^{m-1} = rz^{m-1} \implies \lambda = r^{1/m}$. Thus, $c_u / c_v = r^{1/m}$ for $v \neq u$, and $c_u / c_v \to 1$ if the number of petals $r$ is fixed 
and the uniformity $m$ grows large.

\subsection{Recap: which centrality should we use? \nopunct}\label{sec:recap}
We derived three hypergraph eigenvector centralities.
The appeal of CEC is that we only need to rely on the familiar, i.e., we can just use nonnegative matrix theory.
However, CEC does not incorporate any interesting nonlinear structure, whereas ZEC and HEC incorporate nonlinearity.
HEC is certainly attractive computationally---simple algorithms can compute a unique eigenvector centrality vector.
We don't have scalable algorithms guaranteed to compute a ZEC vector, and even worse, the ZEC vector may not be unique. 
Moreover, the non-uniqueness can show up on simple hypergraphs, as we saw with the sunflower.
Both CEC and HEC have a proper dimensional analysis, while ZEC does not.
On the other hand, ZEC can carry the same rank-1 approximation interpretation as standard graph eigenvector centrality.
Also, in the asymptotics of the sunflower analysis (\cref{fig:sunflower}, right), the HEC score of the center node approaches that of the other nodes, 
while the relative CEC and ZEC scores of the center node to the others are constants that only depend on the number of hyperedges.

So which centrality should we use? 
Our analysis suggests that none is superior to all others.
As is the case with graph centralities in general, the scores are not useful in a vacuum.
Instead, we can use various centralities to study data.
For example, multiple centralities provide more features that can be used for machine learning tasks.
In the next section, we show that the three hypergraph centralities can provide qualitatively different results on real-world data.

\section{Computational experiments and data analysis}\label{sec:data}


\newsavebox{\cmTbox}
\savebox{\cmTbox}{$\cmT$}
\begin{table}[tb]
\setlength{\tabcolsep}{4pt}
\centering
\caption{Summary statistics of datasets. The number of nodes is the dimension of the cubic
adjacency tensor \usebox{\cmTbox} of the largest component of the hypergraph, 
and nnz(\usebox{\cmTbox}) is the number of non-zeros in 
\usebox{\cmTbox}, which we divide by the number of symmetries in the symmetric tensor.}
\vspace{-3mm}
\begin{tabular}{r @{\quad} c c @{\quad} c c @{\quad} c c}
\toprule
& \multicolumn{2}{c}{\!\!\!\!\!\!\!\!\!\!\!3-uniform}
& \multicolumn{2}{l}{\phantom{!!!!!!}4-uniform}
& \multicolumn{2}{l}{\phantom{!!!!!!}5-uniform} \\
\cmidrule(l{-2pt}r{10pt}){2-3}
\cmidrule(l{-2pt}r{10pt}){4-5} 
\cmidrule(l{-2pt}r{0pt}){6-7}
dataset & \# nodes & $\frac{\text{nnz}(\cmT)}{6}$ & \# nodes & $\frac{\text{nnz}(\cmT)}{24}$ & \# nodes & $\frac{\text{nnz}(\cmT)}{120}$ \\
\midrule
$\ngrams$ & 30,885 & 888,411 & 23,713 & 957,904 & 24,996 & 995,952 \\
$\tags$       & 2,981 & 279,369 & 2,856 & 145,676 & 2,564 & 25,475\\
$\dawn$     & 1,677 & 41,225 & 1,447 & 29,829 & 1,212 & 15,690 \\
\bottomrule
\end{tabular}
\label{tab:summary-stats}
\end{table}

We now analyze our proposed eigenvector centralities on three real-world datasets.
We construct a 3-uniform, 4-uniform, and 5-uniform hypergraph from each of the three datasets 
(summary statistics are in \cref{tab:summary-stats}),
so there are 9 total hypergraphs for our analysis.
For each of the 9 hypergraphs, we computed the CEC, ZEC, and HEC scores
on the largest connected component of the hypergraph.
We used Julia's \texttt{eigs} routine to compute the CEC scores,
the dynamical systems algorithm by \citet{Benson-2018-Z} to compute the ZEC scores,
and the NQI algorithm~\cite{Ng-2010-algorithm} to compute the HEC scores.
The software used to compute the results in this section is available at
\url{https://github.com/arbenson/Hyper-Evec-Centrality}.

As discussed above, the ZEC vector need not be unique.
We computed 100 ZEC vectors using random starting points and found
that, for some datasets, the algorithm always converges to the same eigenvector
and in others, the algorithm converges to a few different ones. For the purposes
of our analysis, we use the ZEC vector to which convergence was most common.
However, any of the ZEC vectors is a valid centrality.
(One could also possibly take the mean of several ZEC vectors, 
but linear combinations of $Z$-eigenvectors
are not necessarily $Z$-eigenvectors, unlike the matrix case.)

\paragraph*{$\ngrams$}
These hypergraphs are constructed from the most frequent $N$-grams in the 
Corpus of Contemporary American English (COCA)~\cite{Davies-2011-ngrams}.%
\footnote{\url{https://www.ngrams.info}}
An $N$-gram is a sequence of $N$ words (or parts of words, but we will just say ``words'') that appear contiguously in text.
Here, we use the million most frequent $N$-grams dataset from COCA for $N = 3, 4, 5$ to compose hyperedges.
We construct $m$-uniform hypergraphs ($m = 3, 4, 5$) as follows. The set of nodes in the $m$-uniform hypergraph
correspond to all words appearing in at least one of the $m$-grams in the corpus.
There is a hyperedge between $m$ nodes if the corresponding $m$ words (appearing in any permutation order)
make up one of the $m$-grams appearing in the corpus.
For each hypergraph, we analyze its largest connected component, which is given by taking the node set $L$
from the largest connected component of the graph discussed in \cref{def:conn}, and only keeping
the hyperedges comprised entirely of nodes in $L$.

\Cref{tab:ngrams-ranks} lists the top 20 ranked words according to the CEC, ZEC, and HEC
scores for each of the three hypergraphs.
Many of the top-ranked words are so-called stop words, such as ``the,'' ``and,'' and ``to''; furthermore,
nearly all of the top 20 ranked words for CEC and HEC are stop words or conjunctions,
regardless of the size of the $N$-gram.
This is perhaps not surprising, given that stop words are by definition common in natural language
(stop words also form important clusters in tensor-based clustering of $N$-gram data~\cite{Wu-2016-cocluster}).
The same is true of the ZEC scores, but only for the 3-grams and 4-grams.
In the 5-uniform hypergraph,
the word ``world'' has rank 12 with ZEC, rank 64 with CEC, and rank 84 with HEC;
and the word ``people'' has rank 14 with ZEC, rank 39 with CEC, and rank 44 with HEC.

To better quantify the relationship between the centralities, we computed the Spearman's
rank correlation coefficient between components of each centrality vector. Specifically,
for each of CEC, ZEC, and HEC, we find the top $k$ ranked nodes and compute the 
rank correlation on the sub-vectors consisting of these nodes with the other two centrality vectors.
For example, if $k = 100$, we compute the top 100 nodes according to the CEC vector, take
the length-100 sub-vector corresponding to the same nodes in the ZEC vector, and compute the rank
correlation between the vectors. This is repeated for all six possible pairs of vectors and
plotted as a function of $k$ (\cref{fig:ngrams-corrs}).

As a function of $k$, the rank correlations in this dataset tend to have local minima for $k$ between 20 and a few hundred.
Larger values of $k$ catch the tail of the distribution for which there is less difference in ranking,
which leads to the increase in the correlation for large $k$.
We also see that the correlations tend to decrease as we increase the uniformity of the hypergraph.
In other words, the three centrality measures become more different when considering larger multi-way relationships.
Finally, the rank correlations reveal that ZEC ranks the top nodes (beyond the top 20) substantially differently than CEC and HEC.

\paragraph*{$\tags$}
Ask Ubuntu\footnote{\url{https://askubuntu.com}} is a Stack Exchange forum, 
where Ubuntu users and developers ask, answer, and discuss questions.
Each question may be annotated with up to five tags to aid in classification.
We construct 3-uniform, 4-uniform, and 5-uniform hypergraphs from a previously collected dataset of tag co-appearances
in questions~\cite{Benson-2018-simplicial}.
Specifically, the nodes of the hypergraphs represent tags. We add each possible hyperedge to the $m$-uniform hypergraph
if the corresponding $m$ tags were all simultaneously used to annotate at least one question on the web site (the question
could also have contained other tags; for constructing the hyperedge, we only care if the $m$ tags were used
for the question).
Finally, as before, we use the largest component of the hypergraph.

\Cref{tab:tags-ranks} lists the top 10 nodes ranked by CEC, ZEC, HEC for each of the three hypergraphs.
With the 3-uniform hypergraph, these top-ranked nodes are roughly the same for each centrality measure,
with major Ubuntu version numbers (``12.04,'' ``14.04,'' and ``16.04'') near or at the top of each list.
When moving to 4-uniform and 5-uniform hypergraphs, the version numbers remain highly ranked, but not
the most highly ranked. 
ZEC finds tags related to the Windows operating system relatively more important.
For example, the tags 
``windows-8'', ``windows'', and ``windows-7'' are
ranked 8, 9, and 10 with ZEC for the 5-uniform hypergraph
but ranked 28, 22, and 26 with CEC and 21, 18, and 20 with HEC. 
Furthermore, ZEC ranks ``windows,'' ``windows-xp,'' ``windows-vista,'' ``windows-7,'' ``windows-8,''  and ``windows-10'' 
higher than CEC and HEC for all three hypergraphs. 
We conclude that ZEC provides complimentary information to the centralities for this dataset.

\Cref{fig:tags-corrs} lists the same rank correlations as described above.
We again see that all centrality vectors are relatively correlated for the 3-uniform hypergraph but
less so as we increase the order of the hypergraph.
The sub-vector corresponding to the top 10 ranked CEC nodes has only 0.05 rank correlation with the same ZEC sub-vector
for the 4-uniform hypergraph.

\paragraph*{$\dawn$}
The Drug Abuse Warning Network (DAWN) is a national health surveillance system in hospitals throughout the United States
that records the drug use reported by patients visiting emergency rooms. Here, drugs include illicit substances, prescription medication, 
over-the-counter medication, and dietary supplements.
We use a dataset that aggregates 8 years of DAWN reports~\cite{Benson-2018-simplicial} to construct $m$-uniform hypergraphs for $m = 3,4,5$.
The nodes in each hypergraph correspond to drugs.
We add a hyperedge on $m$ nodes if there is at least one patient that reports using exactly that combination of $m$ drugs.
Again, we use the largest component of the hypergraph.

We again list the top 10 ranked nodes by the three centrality vectors for each of the three hypergraphs (\cref{tab:DAWN-ranks})
as well as the same rank correlation statistics (\cref{fig:DAWN-corrs}).
In this dataset, we see near agreement between the three centrality vectors across the 4-uniform and 5-uniform hypergraphs.
For example, the rank correlations remain above 0.75 for the entire 4-uniform hypergraph for all measured top $k$ sub-vectors.
Alcohol is consistently ranked near the top, which is unsurprising given its pervasiveness in emergency department
visits, especially in combination with other drugs~\cite{Crane-2013-DAWN}.

The ranking from the ZEC vector is substantially different from HEC and CEC for the 3-uniform hypergraph.
Leading sub-vectors of ZEC actually have negative rank correlation with the corresponding HEC and CEC sub-vectors.
As with the $\ngrams$ and $\tags$ datasets, we again conclude that ZEC provides complimentary information for the centralities.


\begin{figure}
\centering
\captionof{table}{Top 20 nodes with
largest centralities for CEC, ZEC, and HEC
for the three hypergraphs constructed from the frequent $n$-grams.
Many stop words appear as the top-ranked nodes, but ZEC picks
up on non-stop words such as ``world'' and ``people''
on the hypergraph constructed from frequent $5$-grams.
}
\setlength{\tabcolsep}{4pt}
\begin{tabular}{l @{\quad} l l l @{\qquad} l l l @{\qquad} l l l}
\toprule
& \multicolumn{3}{l}{\!\!3-uniform}
& \multicolumn{3}{l}{\!\!4-uniform}
& \multicolumn{3}{l}{\!\!5-uniform} \\
\cmidrule(l{-4pt}r{18pt}){2-4}
\cmidrule(l{-4pt}r{18pt}){5-7} 
\cmidrule(l{-4pt}r{1pt}){8-10} 
& CEC & ZEC & HEC & CEC & ZEC & HEC & CEC & ZEC & HEC \\ 
\midrule
1 & the & the & the & the & the & the & the & the & the \\
2 & of & to & to & of & of & to & of & of & to \\
3 & in & and & a & to & to & of & to & in & of \\
4 & and & a & and & in & in & a & in & and & a \\
5 & to & that & of & a & and & in & a & to & that \\
6 & a & in & in & and & that & that & and & that & in \\
7 & that & of & that & that & a & and & that & on & and \\
8 & on & is & for & on & is & is & is & is & i \\
9 & for & for & is & is & on & it & on & a & it \\
10 & with & it & on & for & for & was & be & one & n't \\
11 & is & on & was & was & was & i & for & for & is \\
12 & was & was & with & be & it & you & was & world & you \\
13 & from & with & it & with & you & have & it & with & have \\
14 & at & you & as & it & with & be & have & people & was \\
15 & by & this & you & at & one & for & i & end & be \\
16 & as & as & i & have & be & on & with & part & do \\
17 & his & i & this & i & have & he & at & at & he \\
18 & it & have & be & as & all & n't & you & first & for \\
19 & be & at & have & he & this & with & n't & rest & on \\
20 & are & not & at & you & at & not & as & was & we \\
\bottomrule
\end{tabular}
\label{tab:ngrams-ranks}

\medskip

\includegraphics[width=0.32\columnwidth]{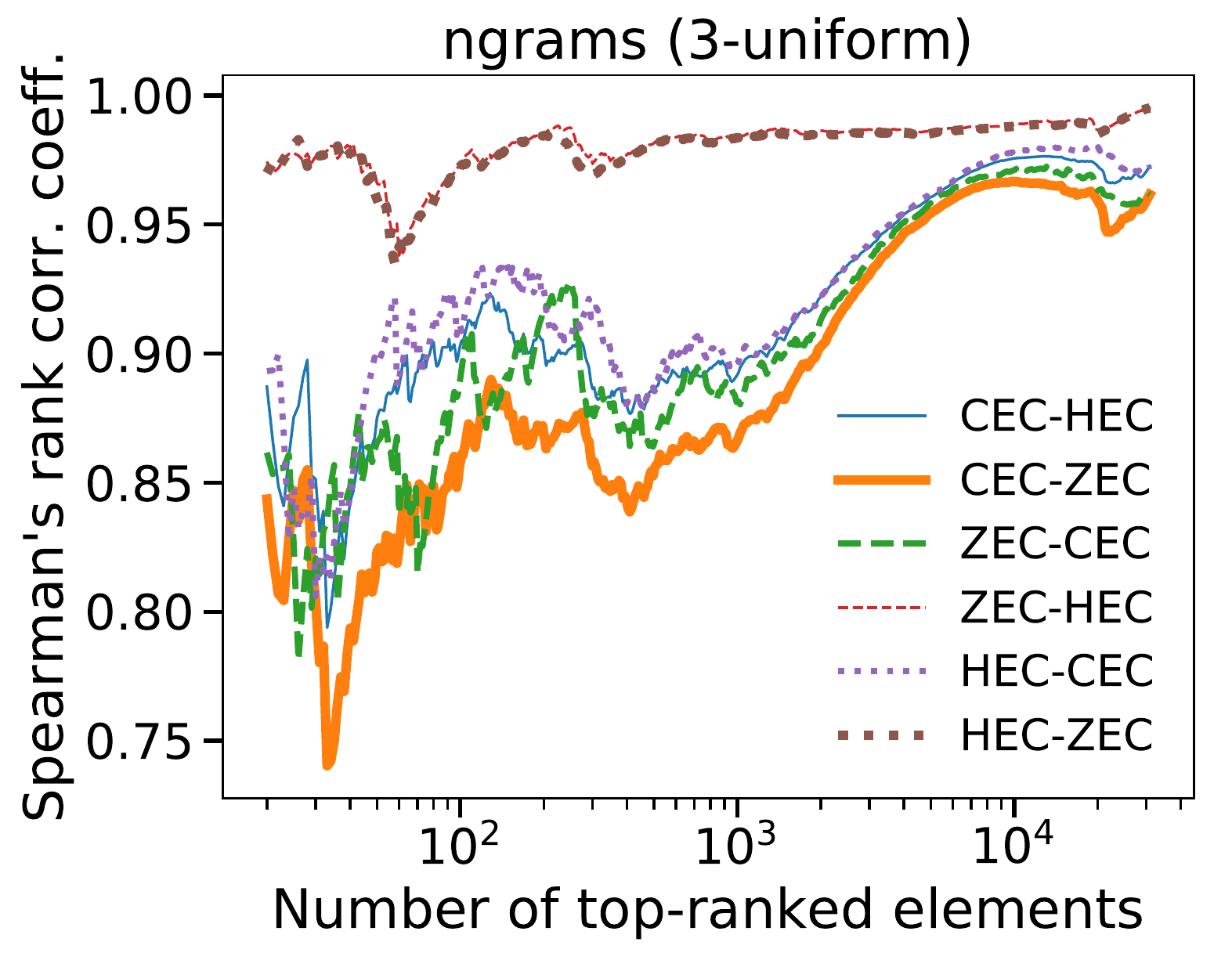}
\includegraphics[width=0.32\columnwidth]{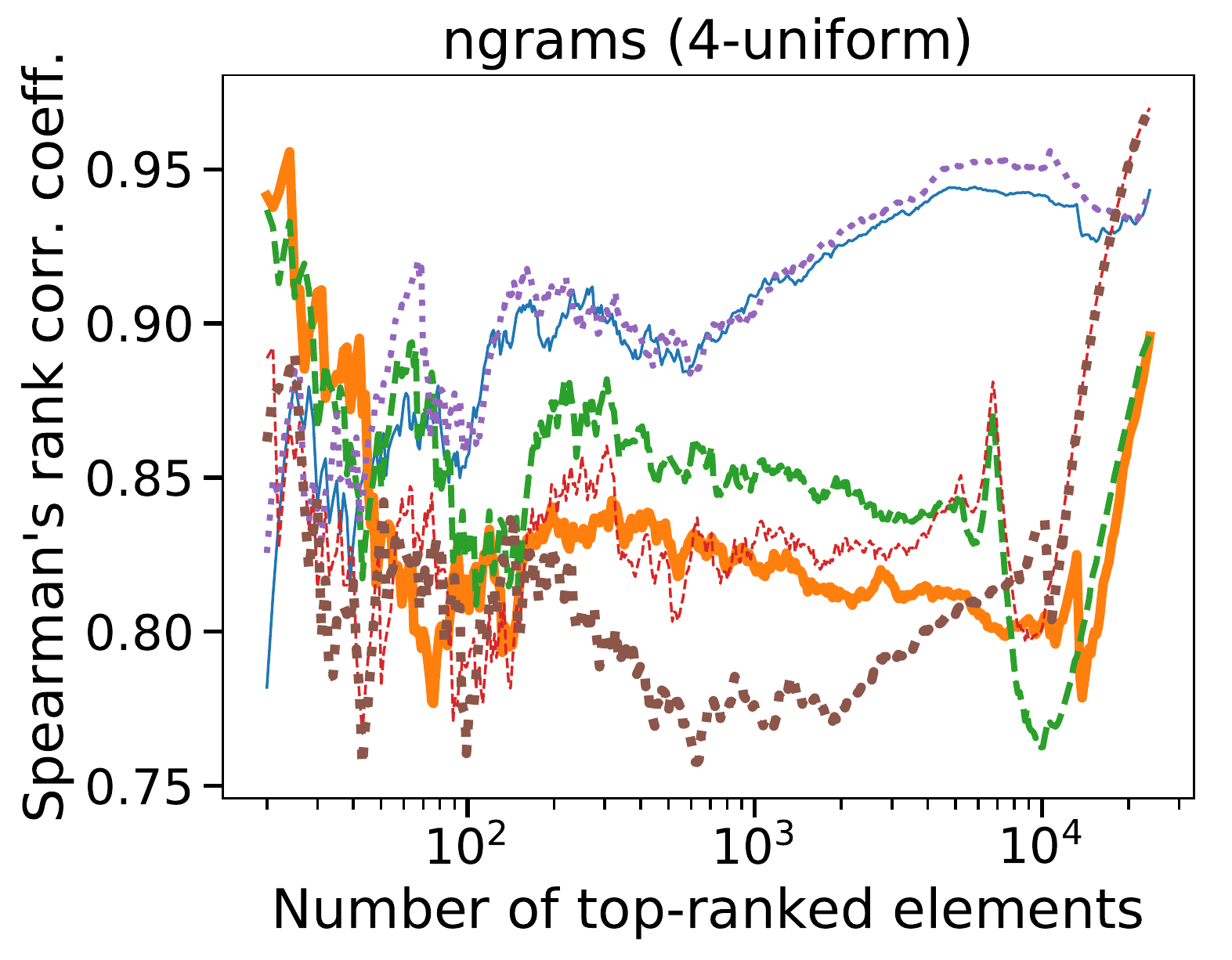}
\includegraphics[width=0.32\columnwidth]{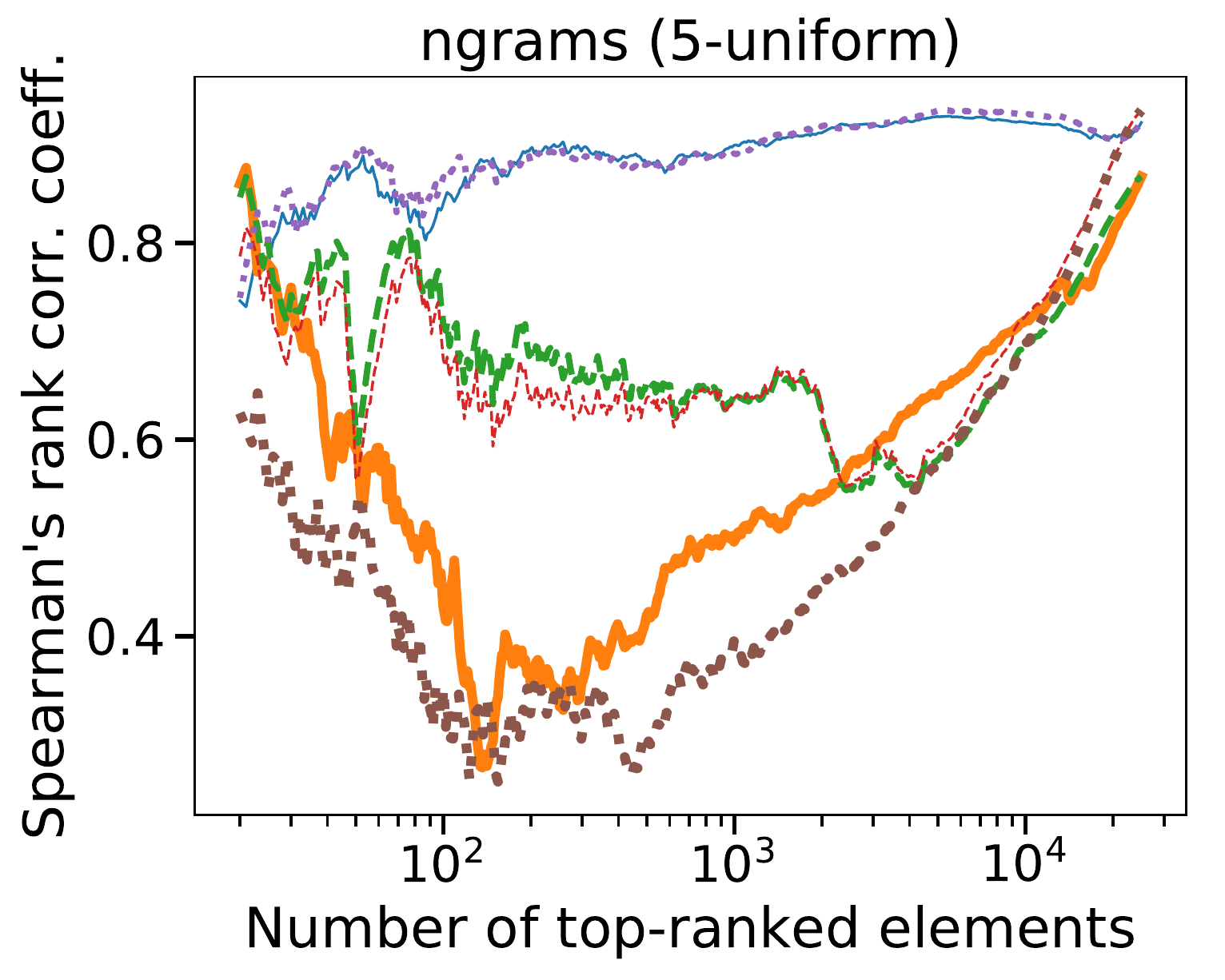}
\captionof{figure}{
Spearman's rank correlation coefficient between the top $k$ ranked
nodes from one centrality measure with the same nodes from the other two centrality
measures on the $N$-grams dataset (the one used to determine the nodes is listed first in the legend).
The rank correlation for the top few hundred nodes with ZEC and the other centralities decreases as the uniformity
of the hypergraph increases, and dips below 0.4 for the 5-uniform hypergraph.
}

\label{fig:ngrams-corrs}
\end{figure} 


\begin{figure}[tb]
\centering
\begin{minipage}{0.55\textwidth}
\centering
\setlength{\tabcolsep}{2pt}
\captionof{table}{Top 10 nodes with
largest centralities for CEC, ZEC, and HEC
for three hypergraphs constructed from the co-tagging
dataset tags-ask-ubuntu.
These highly-ranked nodes are largely the same in the 3-uniform hypergraph.
For the 4-uniform and 5-uniform hypergraphs,
ZEC picks up on Windows-related tags.
Tags related to version numbers are ranked lower as the
uniformity of the hypergraph increases.
}
\scalebox{0.75}{
\begin{tabular}{l @{\quad} l @{\quad} l l l}
\toprule
& & CEC & ZEC & HEC \\
\midrule
\parbox[t]{2mm}{\multirow{10}{*}{\rotatebox[origin=c]{90}{3-uniform}}}
& 1 & 14.04 & 14.04 & 14.04 \\
& 2 & 12.04 & 12.04 & 12.04 \\
& 3 & 16.04 & boot & 16.04 \\
& 4 & server & 16.04 & boot \\
& 5 & command-line & drivers & drivers \\
& 6 & boot & nvidia & command-line \\
& 7 & networking & dual-boot & server \\
& 8 & drivers & server & networking \\
& 9 & unity & command-line & unity \\
& 10 & gnome & upgrade & gnome \\
\midrule
& & CEC & ZEC & HEC \\ \midrule
\parbox[t]{2mm}{\multirow{10}{*}{\rotatebox[origin=c]{90}{4-uniform}}}
& 1 & 14.04 & dual-boot & 14.04 \\
& 2 & boot & boot & boot \\
& 3 & drivers & grub2 & drivers \\
& 4 & 12.04 & partitioning & 12.04 \\
& 5 & 16.04 & uefi & 16.04 \\
& 6 & networking & system-installation & dual-boot \\
& 7 & server & 14.04 & nvidia \\
& 8 & dual-boot & windows & grub2 \\
& 9 & nvidia & installation & networking \\
& 10 & grub2 & 12.04 & partitioning \\
\midrule
& & CEC & ZEC & HEC \\ \midrule
\parbox[t]{2mm}{\multirow{10}{*}{\rotatebox[origin=c]{90}{5-uniform}}}
& 1 & boot & dual-boot & boot \\
& 2 & dual-boot & boot & dual-boot \\
& 3 & 14.04 & grub2 & grub2 \\
& 4 & drivers & partitioning & drivers \\
& 5 & grub2 & uefi & 14.04 \\
& 6 & networking & system-installation & partitioning \\
& 7 & 16.04 & 14.04 & nvidia \\
& 8 & partitioning & windows-8 & 16.04 \\
& 9 & nvidia & windows & 12.04 \\
& 10 & 12.04 & windows-7 & networking \\
\bottomrule
\end{tabular}
}
\label{tab:tags-ranks}
\end{minipage}
\hfill
\begin{minipage}[c]{0.43\columnwidth}
\centering
\includegraphics[width=0.85\textwidth]{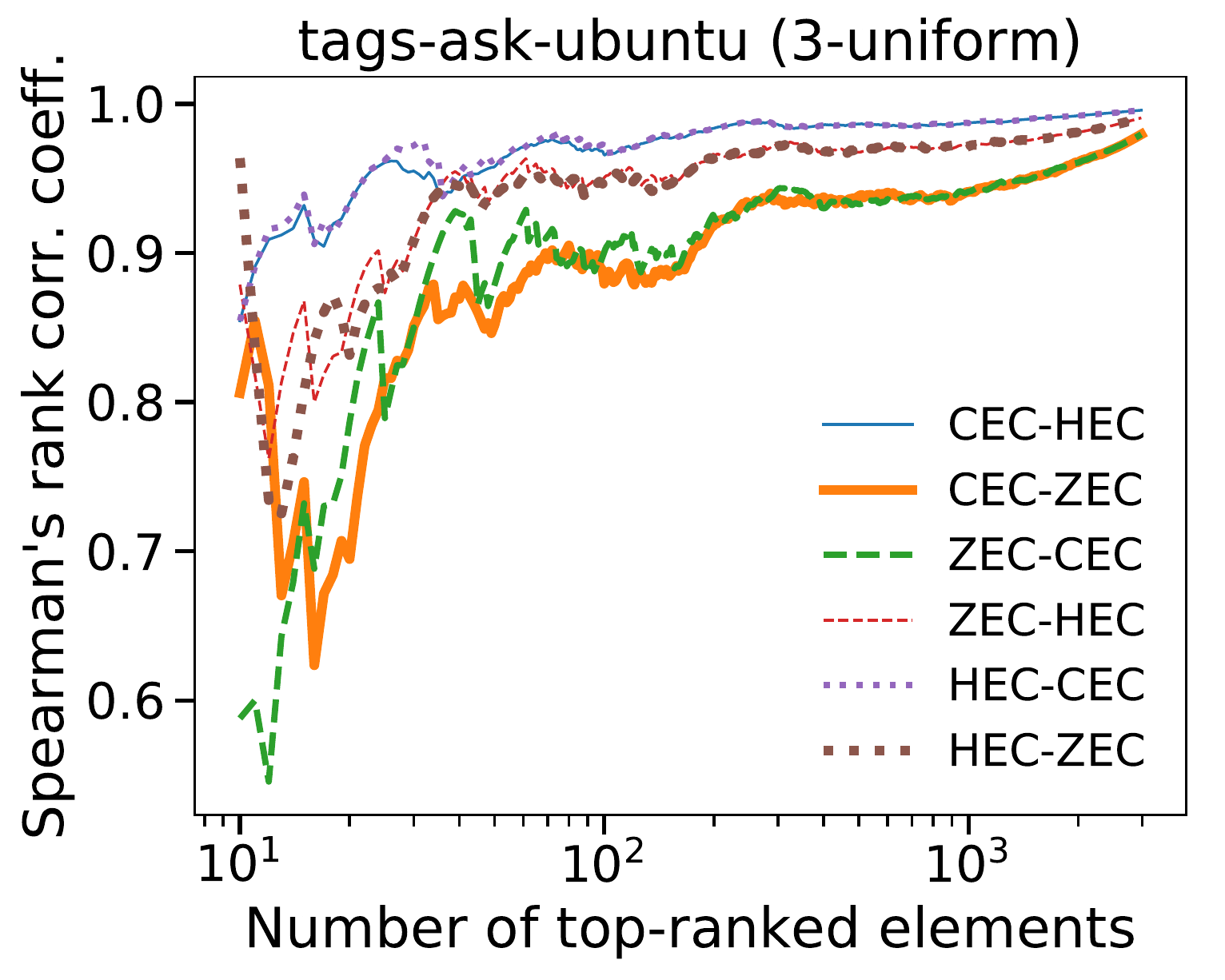}
\includegraphics[width=0.85\textwidth]{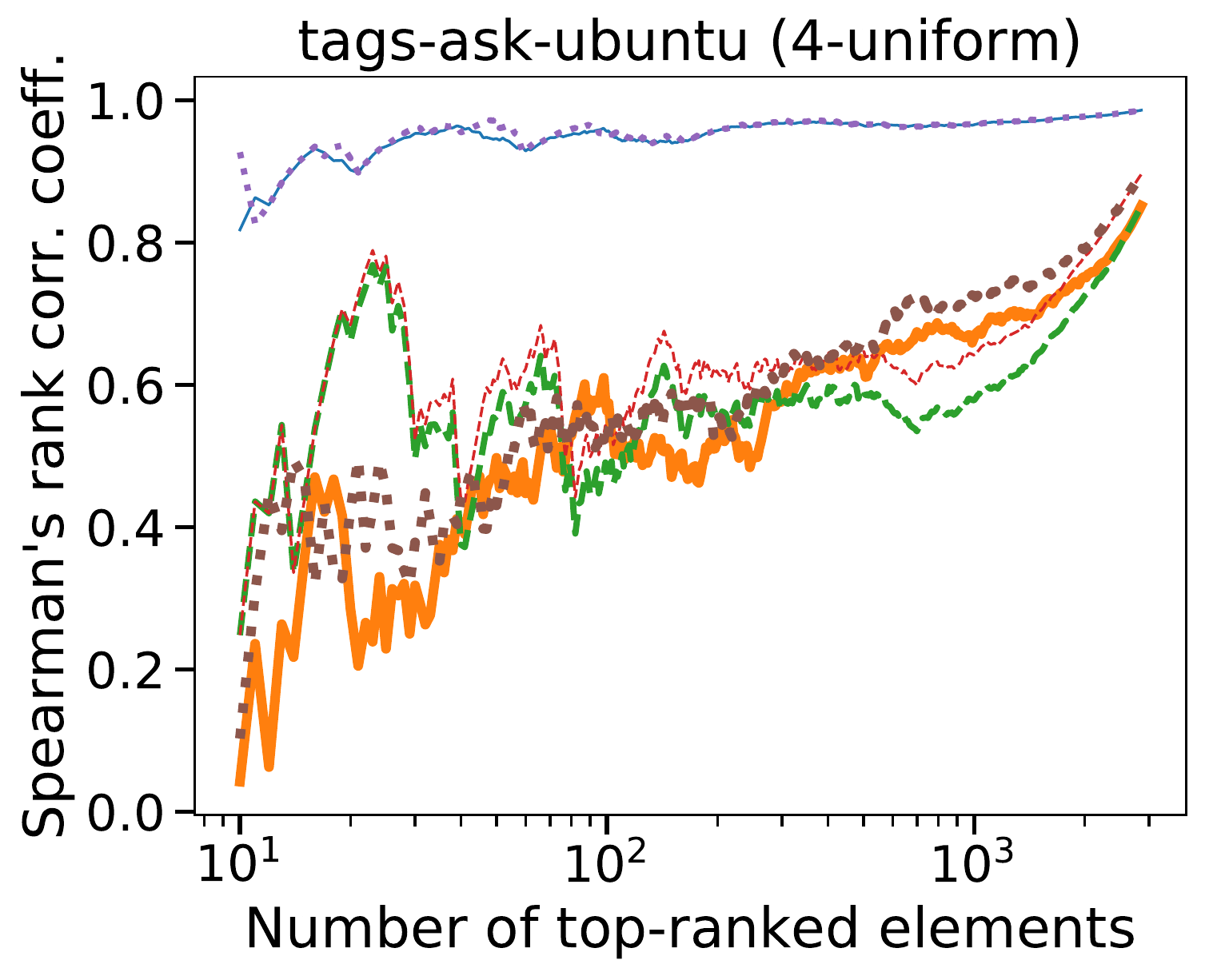}
\includegraphics[width=0.85\textwidth]{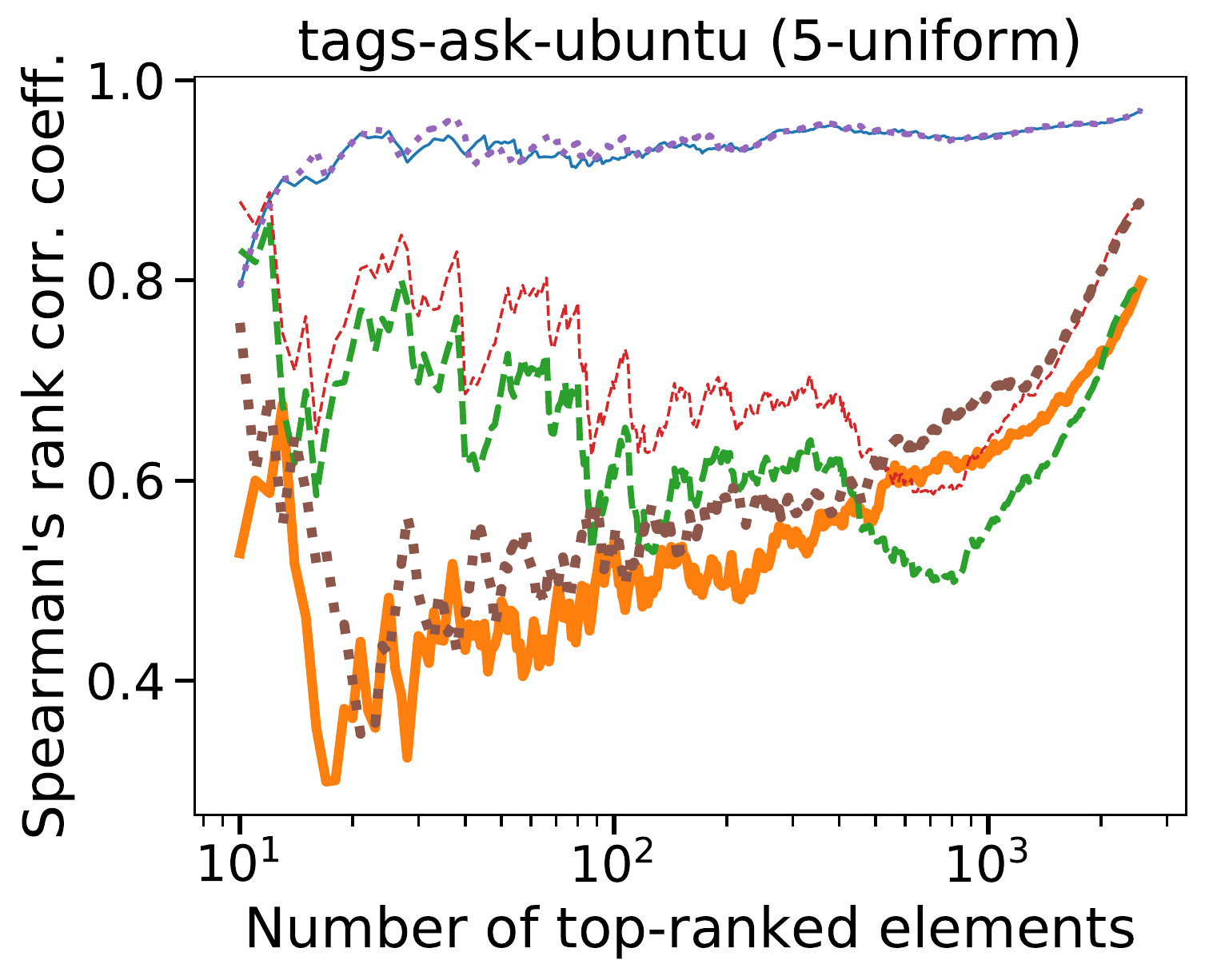}      
\captionof{figure}{Spearman's rank correlation coefficient between the top $k$ ranked
nodes from one centrality measure with the same nodes from the other two centrality
measures on the Ask Ubuntu co-tagging dataset (the one used to determine the nodes is listed first in the legend).
For the 3-uniform hypergraph, all centralities are relatively correlated for $k \ge 30$.
For the 4-uniform and 5-uniform hypergraphs, ZEC tends to be less correlated with
CEC and HEC, which is also seen in the rankings of the top 10 nodes (\cref{tab:tags-ranks}, left).
}
\label{fig:tags-corrs}
\end{minipage}
\end{figure}

\begin{figure}[tb]
\centering
\begin{minipage}{0.55\textwidth}
\centering
\setlength{\tabcolsep}{2pt}
\captionof{table}{Top 10 nodes with
largest centralities for CEC, ZEC, and HEC
for three hypergraphs constructed from the reported sets of drugs
used by patients in emergency room visits in the DAWN dataset.
The highly-ranked nodes in the 4-uniform and 5-uniform
hypergraphs are largely the same and are consistent across the
centrality measures.
}
\scalebox{0.75}{
\begin{tabular}{l @{\quad} l @{\quad} l l l}
\toprule
& & CEC & ZEC & HEC \\
\midrule
\parbox[t]{2mm}{\multirow{10}{*}{\rotatebox[origin=c]{90}{3-uniform}}}
 & 1  & alcohol           & cephalothin     & alcohol           \\
 & 2  & cocaine           & naloxone        & alprazolam        \\
 & 3  & marijuana         & meclizine       & acet.-hydrocodone \\
 & 4  & acet.-hydrocodone & cyclosporine    & clonazepam        \\
 & 5  & alprazolam        & desipramine     & cocaine           \\
 & 6  & clonazepam        & donnatal elixir & marijuana         \\
 & 7  & ibuprofen         & pyridostigmine  & quetiapine        \\
 & 8  & quetiapine        & amoxapine       & lorazepam         \\
 & 9  & acetaminophen     & aspirin         & ibuprofen         \\
 & 10 & lorazepam         & bicalutamide    & zolpidem          \\
\midrule
 & 1  & alcohol           & alcohol           & alcohol           \\
 & 2  & cocaine           & cocaine           & cocaine           \\
 & 3  & marijuana         & marijuana         & marijuana         \\
 & 4  & alprazolam        & alprazolam        & alprazolam        \\
 & 5  & acet.-hydrocodone & acet.-hydrocodone & acet.-hydrocodone \\
 & 6  & clonazepam        & clonazepam        & clonazepam        \\
 & 7  & quetiapine        & heroin            & quetiapine        \\
 & 8  & heroin            & oxycodone         & oxycodone         \\
 & 9  & oxycodone         & methadone         & heroin            \\
 & 10 & lorazepam         & acet.-oxycodone   & acet.-oxycodone   \\
\midrule
 & 1  & alcohol             & cocaine             & alcohol             \\
 & 2  & cocaine             & alcohol             & cocaine             \\
 & 3  & marijuana           & marijuana           & marijuana           \\
 & 4  & alprazolam          & heroin              & alprazolam          \\
 & 5  & acet.-hydrocodone   & alprazolam          & acet.-hydrocodone   \\
 & 6  & clonazepam          & benzodiazepines     & heroin              \\
 & 7  & heroin              & oxycodone           & clonazepam          \\
 & 8  & benzodiazepines     & acet.-hydrocodone   & benzodiazepines     \\
 & 9  & oxycodone           & methadone           & oxycodone           \\
 & 10 & narcotic analgesics & narcotic analgesics & narcotic analgesics \\
\bottomrule
\end{tabular}
}
\label{tab:DAWN-ranks}
\end{minipage}
\hfill
\begin{minipage}[c]{0.43\columnwidth}
\centering
\includegraphics[width=0.85\textwidth]{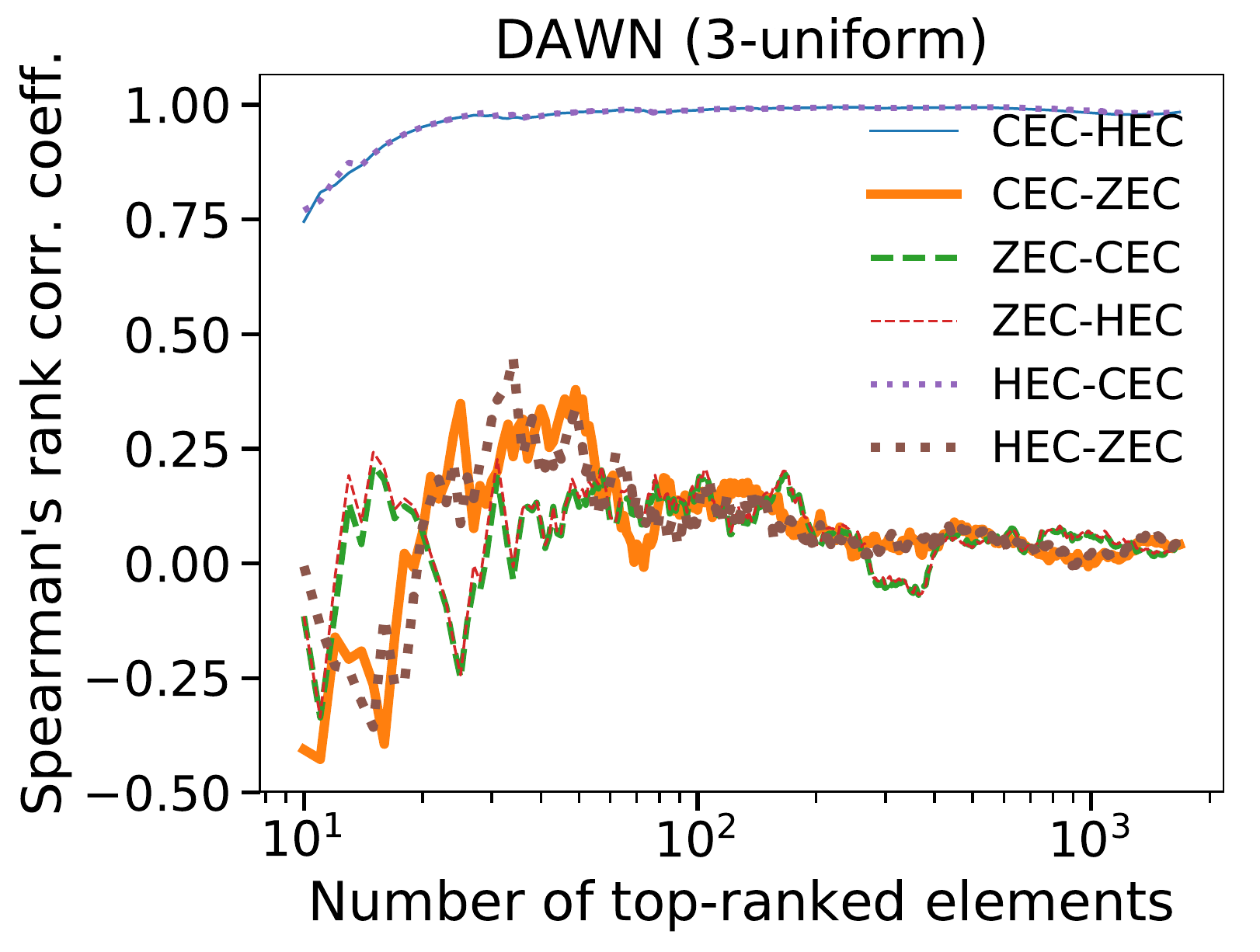}
\includegraphics[width=0.85\textwidth]{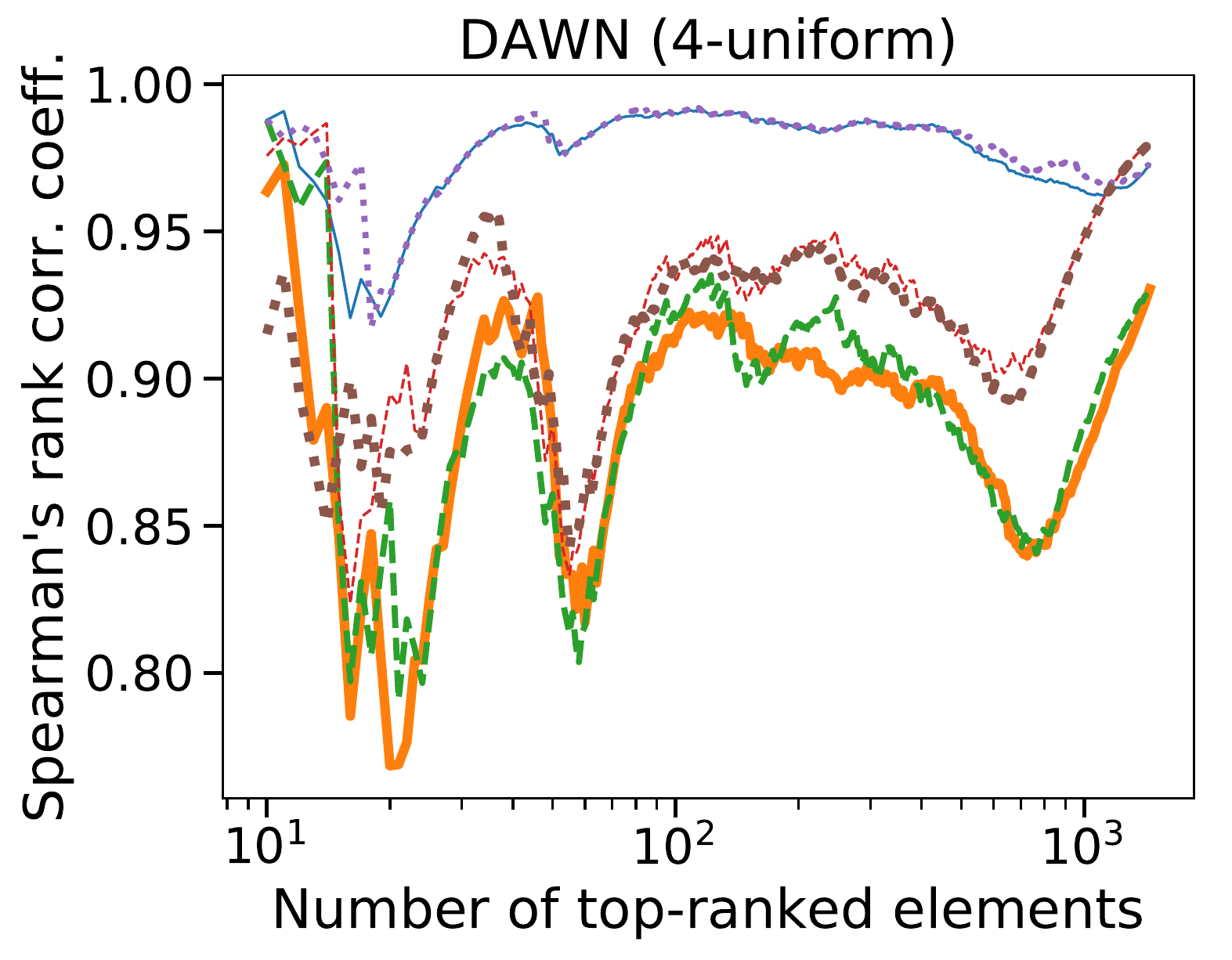}
\includegraphics[width=0.85\textwidth]{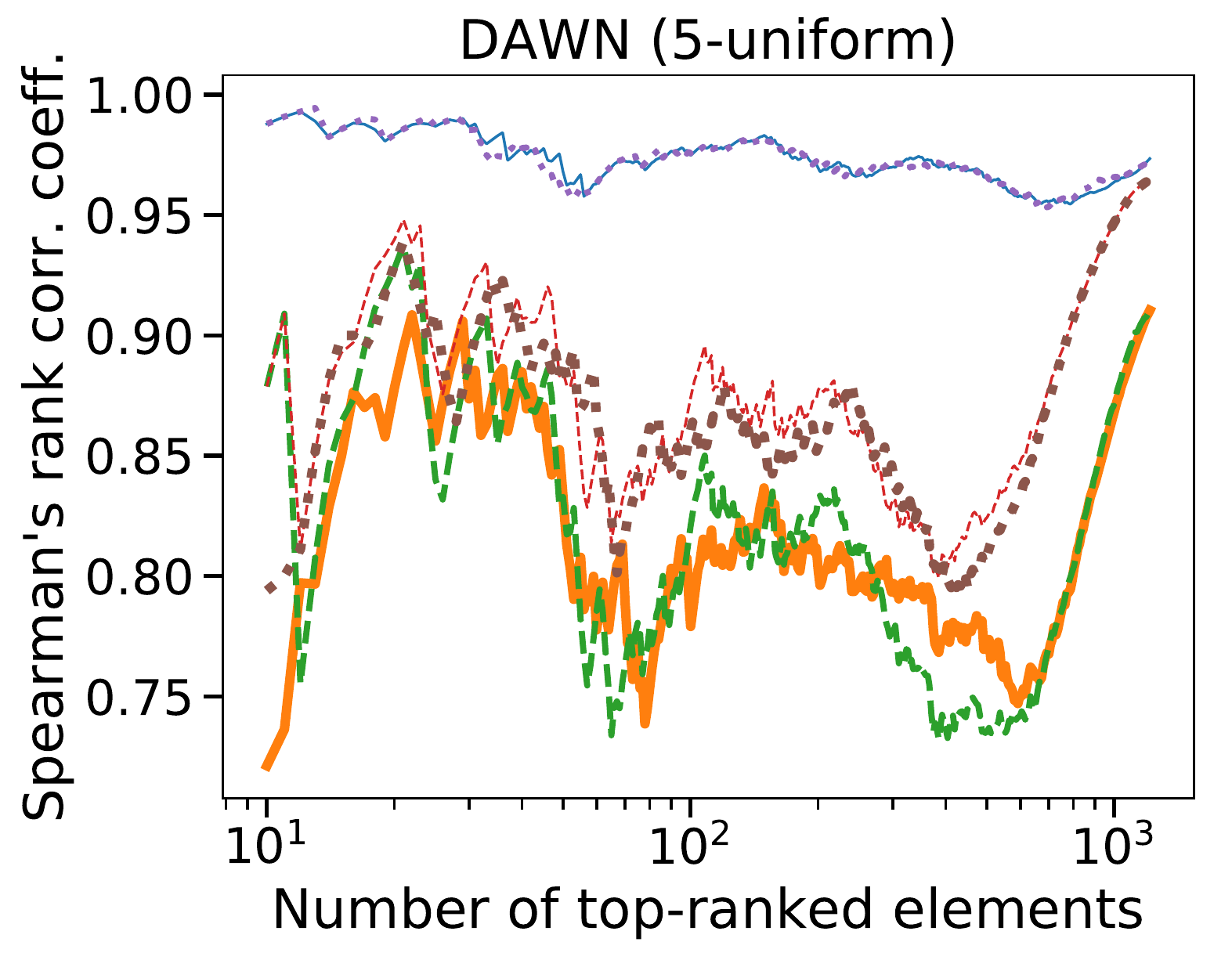}      
\captionof{figure}{Spearman's rank correlation coefficient between the top $k$ ranked
nodes from one centrality measure with the same nodes from the other two centrality
measures on the DAWN dataset (the one used to determine the nodes is listed first in the legend).
ZEC is negatively or nearly uncorrelated with CEC and HEC for the 3-uniform hypergraph,
but all centralities are quite positively correlated for the 4-uniform and 5-uniform hypergraphs,
which can also be observed from the similar top 10 nodes listed in \cref{tab:DAWN-ranks} (left).
}
\label{fig:DAWN-corrs}   
\end{minipage}
\end{figure}

\clearpage

\section{Discussion}
Centrality is a pillar of network science, and emerging datasets containing
supra-dyadic relationships offer new challenges in understanding centrality in
complex systems. Here, we proposed three eigenvector centralities for
hypergraph models of such multi-relational data. Two of these incorporated
non-linear structure and relied on fairly recent developments in the spectral
theory of tensors to create a sensible definition. None of the three centralities is
``best" and we saw empirically that the eigenvectors can provide qualitatively different results.
There are several other types of tensor eigenvectors~\cite{Qi-2017-tensor-book},
as well as other types of Perron-Frobenius theorems for hypergraph data~\cite{Michoel-2012-alignment},
which could be adapted for new centrality measures.
However, $Z$- and $H$-eigenvectors are arguably the most well-understood and commonly used tensor eigenvectors.

There are other centrality measures and ranking methods for higher-order
relational data. For example, multilinear PageRank generalizes PageRank to
tensors~\cite{Gleich-2015-multilinear-PR,Benson-2017-spacey}.
 \Citet{Tudisco-2018-multiplex} developed eigenvector centrality
for multiplex networks using new Perron-Frobenius theory of multi-linear
maps~\cite{Gautier-2017-Perron}; this is most similar to HEC.  There are also
several ranking methods for multi-relational data represented as
tensors~\cite{Kolda-2005-Higher,Kolda-2006-TOPHITS,Franz-2009-TripleRank,Ng-2011-MultiRank},
as well as notions of centrality based on simplicial complexes~\cite{Estrada-2018-simplicial}.
Finally, there are other centralitities for
hypergraphs~\cite{Kapoor-2013-centrality,Busseniers-2014-general,Bonacich-2004-hyperedges,Rodriguez-2007-functional,Estrada-2006-subgraph},
but these do not relate to the multilinear structure of tensors that we studied.

We used a set-based definition of hypergraphs that made the adjacency tensor symmetric. 
In network science, directed graphs with non-symmetric adjacency matrices
are a common model, and eigenvector centrality is still well-defined
if the graph is strongly connected. 
There are similar notions of directionality in hypergraphs. For example, the $\ngrams$
dataset could have been interpreted as ``directed'' since the ordering of the words
matters for its frequency.
Trajectory or path-based data appearing
in transportation systems~\cite{Xu-2016-HON}, citation
patterns~\cite{Rosvall-2014-memory}, and human contact
sequences~\cite{Scholtes-2017-HON} can be encoded as directed hypergraphs in
similar ways. \Cref{thm:PFZ,thm:PFH} hold for arbitrary irreducible nonnegative
tensors, which includes adjacency tensors of strongly connected hypergraphs. 
Therefore, the hypergraph centralities we
developed remain well-defined in these more general cases.
However, computation becomes a bigger challenge.

There are many choices in deciding how to construct hypergraphs from data.
As one example, we made our adjacency tensors binary (i.e., an unweighted hypergraph).
This was not necessary mathematically, and all of the proposed methods
work seamlessly if the hypergraph is weighted.
The Ask Ubuntu and DAWN datasets also demonstrated two different ways of constructing hyperedges---in the
former we included hyperedges induced by larger sets and in the latter we did not.
This choice was made to illustrate the point that there are several ways one could construct hypergraphs from data.
Our methods also relied on theory for symmetric tensors, so we studied uniform hypergraphs.
One could incorporate non-uniformity in several ways.
A simple approach could combine the scores for hypergraphs of different uniformity.
We could also ``embed'' smaller hyperedges into a larger adjacency tensor.
For example, a mixture of 3-node and 4-node hyperedges could be incorporated into an order-4 adjacency tensor,
where a 3-node hyperedge $\{i,j,k\}$ adds non-zeros in the indices that only contain $i$, $j$, and $k$ 
(e.g., setting $\cmT_{ijkk} = 1$ would create one such non-zero).
In general, hypergraphs can be a convenient abstraction, and understanding the right way of constructing
a hypergraph from data is a general research challenge.

\section*{Acknowledgments}
I thank Yang Qi and David Gleich for many helpful discussions.
I thank the reviewers for carefully reading this manuscript.
This research was supported by NSF Award DMS-1830274 and ARO Award W911NF-19-1-0057.

\bibliographystyle{bib}
\bibliography{refs}

\end{document}